\newtheorem{thm}{Theorem}
\newenvironment{thmbis}[1]
  {%
   \addtocounter{thm}{-1}%
   \begin{thm}}
  {\end{thm}}
  \newtheorem{prop}{Proposition}
\newenvironment{propbis}[1]
  {%
   \addtocounter{prop}{-1}%
   \begin{prop}}
  {\end{prop}}
\def\rem{\mbox{rem}}
\newtheorem{theorem}{Theorem}
\newtheorem{remark}{Remark}
\newtheorem{lemma}{Lemma}
\newtheorem{definition}{Definition}
\newtheorem{cor}{Corollary}
\newtheorem{proposition}{Proposition}
\newtheorem{example}{Example}
\newcommand{\rom}[1]{%
  \textup{\uppercase\expandafter{\romannumeral#1}}%
}
\begin{document}

\title{Morphisms on infinite alphabets, countable states automata and regular sequences}
\author{Jie-Meng Zhang,Jin Chen, Yingjun Guo and Zhixiong Wen}




\begin{abstract}
In this paper, we prove that a class of regular sequences can be viewed as projections of fixed points of uniform morphisms on a countable alphabet, and also can be generated by countable states automata. Moreover, we prove that the regularity of some regular sequences is invariant under some codings.
\end{abstract}
\maketitle

\section{Introduction}
Morphisms on a finite alphabet are widely studied in many fields, such as finite automata, symbolic dynamics, formal languages, number theory and also in physics in relation to quasi-crystals (see \cite{jps,Jpa,M,VM,CKM}). A morphism is a map $\sigma:\Sigma^{*}\rightarrow\Sigma^{*}$ satisfying that $\sigma(uv)=\sigma(u)\sigma(v)$
for all words~$u,v\in\Sigma^{\ast}$, where $\Sigma^{*}$ is the free moniod generated by a finite alphabet $\Sigma$ (with $\epsilon$ as the neutral element). Naturally, the morphism can be extended to $\Sigma^{\mathbb{N}}$, which is the set of infinite sequences. The morphisms of constant length are called \emph{uniform morphisms} and the sequence $u=u(0)u(1)u(2)\cdots\in\Sigma^{\mathbb{N}}$ satisfying that $\sigma(u)=u$ is a \emph{fixed point} of $\sigma$.

In \cite{Cob}, Cobham showed that a sequence is a fixed point of a uniform morphism (under a coding) if and only if it is an automatic sequence. Recall that we call a sequence is \emph{automatic} if it can be generated by a finite state automaton. Moreover, a sequence $\{u(n)\}_{n\geq0}$ is $k$-automatic if and only if its $k$-kernel is finite, where the \emph{$k$-kernel} is defined by the set of subsequences,
$$\left\{\{u(k^in+j)\}_{n\geq0}:i\geq0,0\leq j\leq k^i-1\right\}.$$

However, the range of automatic sequences is necessarily finite. To overcome this limit, 
 Allouche and Shallit \cite{Jp} introduced a more general class of regular sequences which take their values in a (possibly infinite) Noetherian ring $R$.
Formally, a sequence is \emph{$k$-regular} if the module generated by its $k$-kernel is finitely generated.

Many regular sequences have been found and studied in \cite{jp,JKR,Bell,Sum,JH,AFO}. 
If a sequence $\{u(n)\}_{n\geq0}$ takes finitely many values, Allouche and Shallit showed that it is regular if and only if it is automatic in \cite{Jp}. Hence, we always assume that regular sequences take infinitely many values.
If a sequence $\{u(n)\}_{n\geq0}$ is an unbounded integer regular sequence,  Allouche and Shallit  \cite{Jp} proved that there exists $c_1\geq0$ such that $u(n)=O(n^{c_1})$ for all $n$ and Bell et. al. \cite{Minimal} showed that there exists $c_2\geq0$ such that $|u(n)|>c_2\log n$ infinitely often.
Recently, Charlier et. al. characterized the regular sequences by counting the paths in the corresponding graph with finite vertices in \cite{ENJ}.

Despite all this, there are no descriptions for regular sequences by automata. Note that automatic sequences can be generated by finite state automata, it is a natural question that can regular sequences be generated by automata with countable states, or morphisms on a countable alphabet? 

Morphisms on infinite alphabets and countable states automata have been studied by many authors.
In \cite{M06}, Mauduit concerned the arithmetical and statistical properties of sequences generated by deterministic countable states automata or morphisms on a countable alphabet. Meanwhile, Ferenczi \cite{F06} studied morphism dynamical systems on infinite alphabets and Le Gonidec \cite{LG,LG06,LG081,LG082} studied complexity function for some $q^{\infty}$-automatic sequences. More about morphisms on infinite alphabets and countable states automata, please see in \cite{T02,GS09,CG10}



In the present paper, we focus on morphisms on a countable alphabet and automata with countable states. We find a class of automata with countable states which can generate regular sequences. That is to say, a class of regular sequences can be generated by countable states automata.


This paper is organized as follows. In Section 2, we give some notations and definitions.
In Section 3, we introduce a class of morphisms on infinite alphabets and countable states automata, which are called to be $m$-periodic~$k$-uniform morphism and $m$-periodic~$k$-DCAO, respectively.  We prove that all the infinite sequences generated by them are $k$-regular.
In Section 4, we consider the codings generated by the sequences satisfying a linear recurrence. Under some conditions, we show that the regularity is invariant under these codings. In the last section, we outline some generalizations.

\section{Preliminary}
Let $\mathbb{N}^{\geq2}$ be the set of integers greater than $2$. For every integer~$b\in \mathbb{N}^{\geq2}$, we define a nonempty alphabet~$\Sigma_b :=\{0,1,\cdots,b-1\}$ and a countable alphabet $\Sigma_{\infty}:=\mathbb{N}=\{0,1,\cdots,n,\cdots\}$. For $b\in \mathbb{N}^{\geq2}\cup\{\infty\}$, let $\Sigma_b^{\ast}$ denote the set of all finite words on $\Sigma_b$. If $w\in\Sigma_b^{\ast}$, then its length is denoted  by $|w|$. If $|w|=0$, then we call $w$ is the empty word, denoted by $\epsilon$. Let $\Sigma_b^k$ denote the set of words of length~$k$ on $\Sigma_b$, i.e., $\Sigma_b^{\ast}=\bigcup_{k\geq0}\Sigma_b^k$. Let $u=u(0)u(1)\cdots u(m)$ and $v=v(0)v(1)\cdots v(n)\in\Sigma_b^*$. The word $uv:=u(0)u(1)\cdots u(m)v(0)v(1)\cdots v(n)$ denotes their \emph{concatenation}. If $|u|\geq1$~(resp.~$|v|\geq1)$, then $u$~(resp.~$v$) is a \emph{prefix} (resp. \emph{suffix}) of $uv$. Clearly, the set $\Sigma_b^{\ast}$ together with the concatenation forms a monoid, where the empty word $\epsilon$ plays the role of the neutral element.

If $b\in \mathbb{N}^{\geq2}$, then every non-negative integer $n$~has a unique representation of the form $n=\sum_{i=0}^{l}n_i b^i$ with
$n_l\neq0$~and~$n_i\in\Sigma_b$. We call $n_ln_{l-1}\cdots n_0$~its~\emph{canonical representation in base $b$}, denoted by~$(n)_b$. If $l\geq |(n)_b|$, denote $(n)_b^{l}=0^{i}(n)_b$ with $i=l-|(n)_b|$. If~$(n)_b=n_ln_{l-1}\cdots n_0$, then the \emph{base-$b$ sum of digits function} is defined by $s_b(n):=\sum_{i=0}^ln_i$. If $b\in \mathbb{N}^{\geq2}$ and $\mathbf{w}=w_lw_{l-1}\cdots w_0$, then $[\mathbf{w}]_b:=\sum_{i=0}^{l}w_i\cdot b^{i}.$
We denote by $\rem_b(n):=r$ if $n\equiv r~(\bmod~ b)~(0\leq r\leq b-1)$.

In this paper, unless otherwise stated, all alphabets under consideration are countable.
\subsection{Morphisms on countable alphabets}
Let~$\Sigma$~and~$\Delta$~be two alphabets. A \emph{morphism~(or substitution)} is a map~$\sigma$~from
~$\Sigma^{\ast}$~to~$\Delta^{\ast}$ satisfying that
$\sigma(uv)=\sigma(u)\sigma(v)$
for all words~$u,v\in\Sigma^{\ast}$. In the whole paper, we use the term ``morphism".

Note that $\sigma(\epsilon)=\epsilon$. If $\Sigma=\Delta$, then we can iterate the application of $\sigma.$ That is, $\sigma^{i}(a)=\sigma(\sigma^{i-1}(a))$ for all $i\geq1$ and $\sigma^{0}(a)=a.$

Let $\sigma$ be a morphism defined on $\Sigma=\{q_0,q_1,\cdots,q_n,\cdots\}$. If $\sigma(q_i)=q_{i_1}q_{i_2}\cdots q_{i_{t_i}}$ with $i_j=a_j i+b_j$ and $a_j,b_j\in\mathbb{Z},$ 
for every $i\geq0$, then $\sigma$ is called a \emph{linear morphism}. If there exists some integer $k\geq1$ such that $|\sigma(a)|=k$ for all $a\in\Sigma$,  then $\sigma$ is called a \emph{$k$-uniform morphism (or $k$-constant length morphism)}. A~$1$-uniform morphism is called a \emph{coding.} If there exists a finite or infinite word $w\in\Sigma^{\mathbb{N}}$ such that $\sigma(w)=w$, then the word $w$ is a \emph{fixed point} of $\sigma$. In fact, if $\sigma(a)=aw$ for some letter $a\in\Sigma$ and nonempty $w\in\Sigma^{*}$, then the sequence of words $a,\sigma(a),\sigma^{2}(a),\cdots$ converges to the infinite word $$\sigma^{\infty}(a):=aw\sigma(w)\sigma^{2}(w)\cdots,$$
where the limit is defined by the metric $d(u,v)=2^{-min\{i:u(i)\neq v(i)\}}$ for $u=u(0)u(1)\cdots$ and
$v=v(0)v(1)\cdots$. Clearly, $\sigma^{\infty}(a)$ is a fixed point of $\sigma.$ Hence, for every morphism $\sigma$ on the alphabet $\Sigma$, we always assume that there exists a letter $a\in\Sigma$ such that $\sigma(a)=aw$ with a nonempty word $w\in\Sigma^{*}$.

\begin{example}\label{ex1}
Let $\Sigma=\Sigma_{\infty}:=\{0,1,\cdots,n,\cdots\}$. Define a 2-uniform morphism $\sigma(i)=i(i+1)$ for all $i\geq0$, then $\sigma^{\infty}(i)=i(i+1)(i+1)(i+2)\cdots$ is a fixed point of $\sigma$. In particular, the fixed point $\sigma^{\infty}(0)=01121223\cdots$ is the sequence of base-$2$ sum of digits function $\{s_2(n)\}_{n\geq0}$.
\end{example}

\begin{example}\label{ex2}(The drunken man morphism)
Let $\Sigma=\{\iota\}\cup\mathbb{Z}$.  Define a 2-uniform morphism $\sigma(\iota)=\iota 1$ and $\sigma(i)=(i-1)(i+1)$ for all $i\in\mathbb{Z}$, then the infinite word $\sigma^{\infty}(\iota)=\iota 102(-1)113(-2)0020224\cdots$ is the only non-empty fixed point of $\sigma$.
\end{example}

\begin{example}\label{ex3}(Infinibonacci morphism)
Let $\Sigma=\mathbb{N}$. Define a 2-uniform morphism $\sigma(i)=0(i+1)$ for all $i\geq0$, then  $\sigma^{\infty}(0)=0102010301020104\cdots$ is a fixed point of $\sigma$.
\end{example}

\subsection{Deterministic infinite states automata}
A \emph{deterministic countable automaton (DCA)} is a directed graph $M=(Q,\Sigma,\delta,q_0,F),$ where $Q$~is a countable set of states, $q_0\in Q$~is the initial state, $\Sigma$~is the finite input alphabet, $F\subseteq Q$~is the set of accepting states,  $\delta:Q\times\Sigma\rightarrow Q$~is the transition function. The transition function $\delta$ can be extended to  $Q\times\Sigma^{*}$ by $\delta(q,\epsilon)=q$ and $\delta(q,wa)=\delta(\delta(q,w),a)$ for all $q\in Q,a\in\Sigma$ and $w\in\Sigma^{*}$.

Similarly, a \emph{deterministic countable states automaton with output (DCAO)} is defined to be a~$6$-tuple $M=(Q,\Sigma,\delta,q_0,\Delta,\tau),$ where~$Q,\Sigma,\delta,q_0$~are as in the definition of DCA as above, $\Delta$~is the output alphabet and $\tau:~Q\rightarrow\Delta$~is the output function.
In particular, if the input alphabet $\Sigma=\Sigma_k$ for some $k\in\mathbb{N}^{\geq2}$, then the automaton $M$ is always called to be a $k$-DCAO.

Let $\{u(n)\}_{n\geq0}=u(0)u(1)u(2)\cdots$ be a sequence on the alphabet $\Delta$. The sequence $\{u(n)\}_{n\geq0}$ is called to be \emph{$k$-automatic}, if the sequence can be generated by a $k$-DCAO, that is, there exists a $k$-DCAO $M=(Q,\Sigma_k,\delta,q_0,\Delta,\tau)$ such that $u(n)=\tau(\delta(q_0,w))$ for all $n\geq0,w\in\Sigma_{k}^{*}$ and $[w]_k=n$.

By the choice of DCAO $M$ satisfying that $\delta(q_0,0)=q_0$, our machine $M$ always computes the same $u(n)$  even if the input one has leading zeros. Hence, unless otherwise stated, all DCAOs satisfy $\delta(q_0,0)=q_0$ and $u(n)=\tau(\delta(q_0,(n)_k))$ for all $n\geq0$.
\begin{example}\label{ex4}
Let $Q=\{q_0,q_1,q_2,\cdots\},\Delta=\mathbb{N}, \delta(q_i,0)=q_i,\delta(q_i,1)=q_{i+1}$ and $\tau(q_i)=i$ for all $i\geq0$. Then, the sequence of base-$2$ sum of digits function $\{s_2(n)\}_{n\geq0}$ is $2$-automatic. It can be generated by a $2$-DCAO in Figure \ref{aut1}.
\begin{figure}[H]
\centering
\begin{tikzpicture}[scale=0.8, every node/.style={scale=0.8}, state/.style={scale=0.8, circle solidus,draw,
inner sep=1pt,minimum size=12mm},node distance = 2.5cm,>=stealth,->,auto,black]
    \node[state,initial]  (q_0)                      {$q_0$ \nodepart{lower} $0$};
    \node[state] (q_1) [right of=q_0] {$q_1$ \nodepart{lower} $1$};
    \node[state] (q_2) [right of=q_1] {$q_2$ \nodepart{lower} $2$};
    \node(q_3) [right of=q_2] {$\cdots$};
    \node[state] (q_n) [right of=q_3] {$q_n$ \nodepart{lower} $n$};
    \node(q_{n+1}) [right of=q_n] {$\cdots$};

    \path[->]
    		(q_0) edge [bend left]  node {$1$} (q_1)
                      edge [loop above] node {$0$} ()
    		(q_1) edge [bend left]   node {$1$} (q_2)
                      edge [loop above] node {$0$} ()
    		(q_2) edge [bend left]   node  {$1$} (q_3)	
    		          edge [loop above] node {$0$} ()
            (q_3) edge [bend left]   node  {$1$} (q_n)	
            (q_n) edge [bend left]   node  {$1$} (q_{n+1})
                      edge [loop above] node {$0$} ();
    \end{tikzpicture}
\caption{DCAO generating the base-$2$ sum of digits function.}
\label{aut1}
\end{figure}
\end{example}

\begin{example}\label{ex5}
Let $Q=\{q_0\}\cup\mathbb{Z},\Delta=\{\iota\}\cup\mathbb{Z}, \delta(q_0,0)=q_0,\delta(q_0,1)=1,\delta(i,0)=i-1,\delta(i,1)=i+1$, $\tau(q_0)=\iota$ and $\tau(i)=i$ for all $i\in\mathbb{Z}$. Then, the sequence defined in Example \ref{ex2} can be generated by a $2$-DCAO in Figure \ref{aut2}.
\begin{figure}[H]
\centering
\begin{tikzpicture}[scale=0.8, every node/.style={scale=0.8}, state/.style={scale=0.8, circle solidus,draw,
inner sep=1pt,minimum size=12mm},>=stealth,->,auto,black]
\node[state] (a) {$1$ \nodepart{lower} $1$};
\node[state] (b) [right=15mm of a] {$2$ \nodepart{lower} $2$};
\node(c) [right=15mm of b] {$\cdots$};
\node[state] (e) [left=15mm of a] {$0$ \nodepart{lower} $0$};
\node[state] (f) [left=15mm of e] {$-1$ \nodepart{lower} $-1$};
\node(g) [left=15mm of f] {$\cdots$};
\node[state,initial] (i)[above=8mm of a]  {$q_0$ \nodepart{lower} $\iota$};
\draw [->] (c.200) to [bend left] node [below] {$0$} (b.-20);
\draw [->] (b.20) to [bend left]  node [above] {$1$} (c.160);
\draw [->] (b.200) to [bend left] node [below] {$0$} (a.-20);
\draw [->] (a.20) to [bend left]  node [above] {$1$} (b.160);
\draw [->] (a.200) to [bend left] node [below] {$0$} (e.-20);
\draw [->] (e.20) to [bend left]  node [above] {$1$} (a.160);
\draw [->] (e.200) to [bend left] node [below] {$0$} (f.-20);
\draw [->] (f.20) to [bend left]  node [above] {$1$} (e.160);
\draw [->] (f.200) to [bend left] node [below] {$0$} (g.-20);
\draw [->] (g.20) to [bend left]  node [above] {$1$} (f.160);
\draw [->] (i.270) to node [right] {$1$} (a.90);
\draw [->] (i) to [out=120, in=60,loop,distance=10mm] node [above] {$0$} (i);
\end{tikzpicture}

\caption{DCAO generating the sequence defined in Example \ref{ex2}.}
\label{aut2}
\end{figure}
\end{example}

\begin{example}\label{ex6}
Let $Q=\{q_0,q_1,q_2,\cdots\},\Delta=\mathbb{N}, \delta(q_i,0)=q_0,\delta(q_i,1)=q_{i+1}$ and $\tau(q_i)=i$ for all $i\geq0$. Then, the sequence defined in Example \ref{ex3} is $2$-automatic. It can be generated by a $2$-DCAO in Figure \ref{aut3}.
\begin{figure}[H]
\centering
\begin{tikzpicture}[scale=0.8, every node/.style={scale=0.8}, state/.style={scale=0.8, circle solidus,draw,
inner sep=1pt,minimum size=12mm},>=stealth,->,auto,black]
\node[state,initial] (a) {$q_0$ \nodepart{lower} $0$};
\node[state] (b) [right=15mm of a] {$q_1$ \nodepart{lower} $1$};
\node[state] (c) [right=15mm of b] {$q_2$ \nodepart{lower} $2$};
\node[state] (d) [right=15mm of c] {$q_3$ \nodepart{lower} $3$};
\node (e) [right=15mm of d]{$\cdots$};

\draw [->] (a) to [out=120, in=60,loop,distance=10mm] node [above] {$0$} (a);
\draw [->]  (a.20) to [bend left] node [above] {$1$} (b.160);
\draw [->]  (b.20) to [bend left] node [above] {$1$} (c.160);
\draw [->]  (c.20) to [bend left] node [above] {$1$} (d.160);
\draw [->]  (d.20) to [bend left] node [above] {$1$} (e.160);
\draw [->]  (b.200) to [bend left] node [above] {$0$} (a.-20);
\draw [->]  (c.220) to [bend left=40] node [above] {$0$} (a.-40);
\draw [->]  (d.240) to [bend left=43] node [above] {$0$} (a.-60);
\end{tikzpicture}

\caption{DCAO generating the sequence defined in Example \ref{ex3}.}
\label{aut3}
\end{figure}
\end{example}

\bigskip

By the definitions of $k$-uniform morphism and $k$-DCAO,  we note that each sequence $\mathbf{u}=\{u(n)\}_{n\geq0}$ can be generated by a $k$-uniform morphism or a $k$-DCAO for every $k\in\mathbb{N}^{\geq2}$. In fact, let $\sigma:\mathbb{N}\rightarrow \mathbb{N}^{*}$ be a $k$-uniform morphism  defined by $\sigma(i)=(ki)(ki+1)\cdots (ki+k-1)$ and $\rho$ be a coding by $\rho(i)=u(i)$. We have $\mathbf{u}=\rho(\sigma^{\infty}(0))$. Similarly, it can be generated by a $k$-DCAO by choosing $\delta(q_0,(n)_k)=q_n$ and $\tau(q_n)=u(n)$ for all $n\geq0$.

Hence, in the whole paper, we will consider the $k$-uniform morphisms and the $k$-DCAOs that the codings and the output functions satisfying some conditions. Assume that $\sigma$ is a morphism on the alphabet $\{q_0,q_1,q_2,\cdots\}$,  we will consider the $k$-DCAO satisfying $\tau(q_i)=i$ in Section 3,  $\tau(q_i)=L_i$ in Section 4  and some others, where $\{L_i\}_{i\geq0}$ is a given sequence.

\section{Regularity of the index sequences generated by~$m$-periodic~$k$-uniform morphisms}
In this section, we first introduce a definition in the following.
\begin{definition}\label{periodicmorphism}
If there exists a matrix $T=(t_{r,s})_{0\leq r\leq m-1,0\leq s\leq k-1}\in \mathbb{N}^{m\times k}$ such that
\begin{equation*}\label{periodic}
  \sigma(q_{mi+n})=q_{mi+t_{n,0}}q_{mi+t_{n,1}}~\cdots~q_{mi+t_{n,k-1}}
\end{equation*}
for every $0\leq n\leq m-1$ and $i\geq0$,
then $\sigma$ is called to be a \emph{$m$-periodic~$k$-uniform morphism} on $\{q_0,q_1,q_2,\cdots\}$ and we always denote it by $\sigma_{T}$. The matrix $T$ is called to be the \emph{index matrix} of $\sigma.$
\end{definition}

If there exists an integer $0\leq n\leq m-1$ such that $t_{n,0}=n$, then $\sigma_T^{\infty}(q_{n})$ is a fixed point of $\sigma_T$. Unless otherwise stated, we assume~$t_{0,0}=0$, then $\sigma_T^{\infty}(q_0)$ is a fixed point of $\sigma_T$.
By the definition of $m$-periodic~$k$-uniform morphism, we have the following definition similarly.
\begin{definition}\label{periodicIDCA}
If there exists a matrix $T=(t_{r,s})_{0\leq r\leq m-1,0\leq s\leq k-1}\in \mathbb{N}^{m\times k}$ such that the $k$-DCAO $M=(Q,\Sigma_k,q_0,\delta,\Delta,\tau)$ satisfying that
$$\delta(q_{mi+n},j)=q_{mi+t_{n,j}},$$
for every $0\leq n\leq m-1,0\leq j\leq k-1$ and $i\geq0$,
then the automaton $M$ is called to be a \emph{$m$-periodic~$k$-DCAO} and we always denote it by $M_{T}$. 
\end{definition}

Clearly, the fixed point $\tau(\sigma_T^{\infty}(q_0))$ can be generated by the $k$-DCAO $M_T$.
In the rest of this section, we will consider the coding $\tau:q_i\rightarrow i~(i\geq0)$ and its corresponding sequence $\{\mathbf{i}(n)\}_{n\geq0}:=\tau(\sigma_T^{\infty}(q_0))$. Note that $$\sigma_T^{\infty}(q_0)=\{q_{\mathbf{i}(n)}\}_{n\geq0}.$$
Hence, the sequence~$\{\mathbf{i}(n)\}_{n\geq0}$ is called to be the \emph{index sequence} of the morphism $\sigma_T$.

In fact, the index sequence~$\{\mathbf{i}(n)\}_{n\geq0}$ can be generated by the morphism $\sigma^{\prime}$ defined on $\mathbb{N}$ by $$\sigma^{\prime}(mi+n)=(mi+t_{n,0})~(mi+t_{n,1})~\cdots~(mi+t_{n,k-1}),$$ for every $0\leq n\leq m-1$ and $i\geq0$.
Similarly, the corresponding transition function can be defined by $\delta(r,s)=t_{r,s}~(0\leq r\leq m-1,0\leq s\leq k-1)$
and $\delta(mi+j,a)=mi+\delta(j,a)$, for every $i\geq0$, $j\in\Sigma_m$ and $a\in\Sigma_k$.
It is easy to check that for every $n\geq0$, $\mathbf{i}(n)=\delta(0,(n)_{k})$ and $\delta(n,w)=n+\delta(\rem_m(n),w)-\rem_m(n)$ for every $w\in\Sigma_k^{\ast}$.

\begin{lemma}\label{transition function}
For every $l\geq0$ and $0\leq j\leq k^l-1$, we have
$$\mathbf{i}(k^ln+j)=\mathbf{i}(n)+\delta(\rem_m(\mathbf{i}(n)),(j)_{k}^{l})-\rem_m(\mathbf{i}(n)).$$
\end{lemma}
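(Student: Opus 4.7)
The plan is to unwind definitions, using the two facts the authors have just flagged as easy to check: namely that $\mathbf{i}(n)=\delta(0,(n)_k)$, and that $\delta(N,w)=N+\delta(\rem_m(N),w)-\rem_m(N)$ for every $N\ge 0$ and $w\in\Sigma_k^{\ast}$. Once these are in hand, the identity should drop out in three short steps.

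First I would record the digit identity $(k^ln+j)_k=(n)_k\,(j)_k^{\,l}$: writing $k^ln+j$ in base $k$ is the same as concatenating the canonical base-$k$ representation of $n$ with the length-$l$ base-$k$ representation of $j$ (padded on the left by zeros, which is exactly what $(j)_k^{\,l}$ denotes). The only slightly degenerate case is $n=0$, where $(n)_k$ is taken to be the empty word (or $0$, absorbed by the standing assumption $\delta(q_0,0)=q_0$); either convention yields $\delta(0,(0)_k)=0=\mathbf{i}(0)$, so the identity continues to hold.

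Next I would combine this with the recursive property of the extended transition function, $\delta(q,wa)=\delta(\delta(q,w),a)$, applied digit by digit. This gives
\begin{equation*}
\mathbf{i}(k^ln+j)=\delta\!\bigl(0,(n)_k(j)_k^{\,l}\bigr)=\delta\!\bigl(\delta(0,(n)_k),\,(j)_k^{\,l}\bigr)=\delta\!\bigl(\mathbf{i}(n),(j)_k^{\,l}\bigr).
\end{equation*}
Finally, I apply the ``translation by multiples of $m$'' identity $\delta(N,w)=N+\delta(\rem_m(N),w)-\rem_m(N)$ with $N=\mathbf{i}(n)$ and $w=(j)_k^{\,l}$, which yields exactly
\begin{equation*}
\mathbf{i}(k^ln+j)=\mathbf{i}(n)+\delta\!\bigl(\rem_m(\mathbf{i}(n)),(j)_k^{\,l}\bigr)-\rem_m(\mathbf{i}(n)),
\end{equation*}
as required.

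I do not expect any real obstacle here; the statement is essentially a bookkeeping consequence of the definition of an $m$-periodic $k$-uniform morphism. The one point where I would take care is justifying the translation identity itself if the reader has not accepted it on faith: it follows by a straightforward induction on $|w|$ from the single-digit definition $\delta(mi+j,a)=mi+\delta(j,a)$ (with $i=(N-\rem_m(N))/m$ and $j=\rem_m(N)$), so it can be inserted as a one-line remark if needed.
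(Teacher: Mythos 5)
Your proposal is correct and follows exactly the same route as the paper's proof: write $(k^ln+j)_k$ as the concatenation $(n)_k(j)_k^{\,l}$, apply the extension rule for $\delta$ to get $\mathbf{i}(k^ln+j)=\delta(\mathbf{i}(n),(j)_k^{\,l})$, then invoke the translation identity $\delta(N,w)=N+\delta(\rem_m(N),w)-\rem_m(N)$. The extra care you take with the $n=0$ case and with justifying the translation identity by induction on $|w|$ goes slightly beyond what the paper writes down, but does not change the argument.
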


\begin{proof}
For every $l\geq0$ and $0\leq j\leq k^l-1$, we denote $(j)_{k}^{l}:=j_{l-1}j_{l-2}\cdots j_1j_0$. Then,
\begin{eqnarray*}
   \mathbf{i}(k^ln+j)  &=& \delta(0,(k^ln+j)_k)=\delta(0,(n)_kj_{l-1}j_{l-2}\cdots j_1j_0) \\
 &=& \delta(\delta(0,(n)_k),j_{l-1}j_{l-2}\cdots j_1j_0)=\delta(\mathbf{i}(n),j_{l-1}j_{l-2}\cdots j_1j_0)  \\
     &=& \mathbf{i}(n)+\delta(t,j_{l-1}j_{l-2}\cdots j_1j_0)-t
\end{eqnarray*}
where $t=\rem_m(\mathbf{i}(n)).$
\end{proof}

By Lemma \ref{transition function}, we will prove the following theorem.
\begin{theorem}\label{index regular}
 If~$\sigma$~is a~$m$-periodic~$k$-uniform morphism, then the index sequence~$\{\mathbf{i}(n)\}_{n\geq0}$~is~$k$-regular.
\end{theorem}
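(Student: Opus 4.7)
The plan is to build a finitely generated $\mathbb{Z}$-module containing every element of the $k$-kernel of $\mathbf{i}$, and then conclude via the Noetherian property of $\mathbb{Z}$. The starting point is Lemma \ref{transition function}, which gives
\[
\mathbf{i}(k^{l}n+j) = \mathbf{i}(n) + \delta\bigl(\rem_m(\mathbf{i}(n)),\,(j)_{k}^{l}\bigr) - \rem_m(\mathbf{i}(n)),
\]
so the correction term depends on $n$ only through the residue $\rem_m(\mathbf{i}(n))$, which takes at most $m$ distinct values.

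The key step is to linearise this case-dependent formula. For each $r \in \{0,1,\ldots,m-1\}$, I would introduce the indicator sequence $\chi_r$ defined by $\chi_r(n)=1$ if $\rem_m(\mathbf{i}(n))=r$ and $\chi_r(n)=0$ otherwise. Because exactly one $\chi_r(n)$ is nonzero at each $n$, the lemma may be rewritten as
\[
\mathbf{i}(k^{l}n+j) = \mathbf{i}(n) + \sum_{r=0}^{m-1} \bigl(\delta(r,(j)_{k}^{l}) - r\bigr)\,\chi_r(n),
\]
which expresses every member of the $k$-kernel of $\mathbf{i}$ as a $\mathbb{Z}$-linear combination of the finite collection $\{\mathbf{i}, \chi_0, \chi_1, \ldots, \chi_{m-1}\}$. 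Consequently the $k$-kernel lies inside the $\mathbb{Z}$-module $N$ generated by these $m+1$ sequences, and $N$ is finitely generated by construction.

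To conclude, since $\mathbb{Z}$ is Noetherian, every submodule of the finitely generated module $N$ is itself finitely generated; in particular the $\mathbb{Z}$-module generated by the $k$-kernel of $\mathbf{i}$ is finitely generated, so $\mathbf{i}$ is $k$-regular by definition. The main obstacle is precisely the appearance of $\rem_m(\mathbf{i}(n))$ in the correction term supplied by Lemma \ref{transition function}: because this depends on the sequence $\mathbf{i}$ itself rather than on a fixed parameter, a direct reading of the formula does not yield a linear combination with constant coefficients. The indicator decomposition into the $m$ residue classes is what converts the case analysis into an honest $\mathbb{Z}$-linear combination, and once this is in place the Noetherian step is routine.
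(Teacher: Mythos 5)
Your proof is correct, but it takes a genuinely different route from the paper's. Both arguments start from Lemma \ref{transition function}, and the integer coefficients $\delta(r,(j)_k^l)-r$ you attach to $\chi_r$ are exactly the components of the vectors $V_{l,j}$ that the paper introduces. The difference is in how the finiteness is extracted. The paper stays inside the $k$-kernel: it shows that $V_{l,j}-V_{l-1,\frac{j-j_0}{k}}$ can take only finitely many values, deduces by a telescoping argument that every $V_{l,j}$ (hence every kernel sequence $\{\mathbf{i}(k^ln+j)\}_{n\geq0}$) is a $\mathbb{Z}$-linear combination of the kernel sequences with $l'\leq L$ for some fixed $L$, and concludes directly from the definition of $k$-regularity. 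You instead step outside the kernel: you adjoin the $m$ indicator sequences $\chi_0,\ldots,\chi_{m-1}$ of the residue classes of $\rem_m(\mathbf{i}(n))$, observe that the whole kernel lies in the $\mathbb{Z}$-module generated by $\{\mathbf{i},\chi_0,\ldots,\chi_{m-1}\}$, and invoke the Noetherian property of $\mathbb{Z}$ to conclude that the module generated by the kernel is finitely generated. Your version is shorter and avoids the somewhat delicate counting of increment vectors (the paper's bound $m^{mk}$ and the existence of $L$); what it gives up is explicitness --- the paper's proof exhibits actual linear recurrences among the kernel sequences themselves, with an effective bound on the number of generators, which is the form used later in Section 4. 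Both are valid proofs of the theorem as stated.
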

\begin{proof}
For every $l\geq0$ and $0\leq j\leq k^l-1$, let $(j)_{k}^{l}=j_{l-1}j_{l-2}\cdots j_1j_0$. Define
$$V_{l,j}:=\left(
    \begin{array}{l}
      \delta(0,j_{l-1}j_{l-2}\cdots j_1j_0) \\
      \delta(1,j_{l-1}j_{l-2}\cdots j_1j_0)-1\\
      \delta(2,j_{l-1}j_{l-2}\cdots j_1j_0)-2\\
     \cdots\cdots\cdots\\
     \delta(m-1,j_{l-1}j_{l-2}\cdots j_1j_0)-(m-1)\\
    \end{array}
  \right).
$$

By Lemma \ref{transition function},
if $V_{l,j}=c_1V_{l^{\prime},j^{\prime}}+c_2V_{l^{\prime\prime},j^{\prime\prime}}$ $(l^{\prime},l^{\prime\prime}<l)$, then $\mathbf{i}(k^ln+j)=c_1\mathbf{i}(k^{l^{\prime}}n+j^{\prime})+c_2\mathbf{i}(k^{l^{\prime\prime}}n+j^{\prime\prime})-(c_1+c_2-1)\mathbf{i}(n)$. Note from \cite{jp} that if a sequence satisfies linear recurrence relations, then it is regular. Hence, it suffices to prove that the vectors satisfy the linear recurrence relations.

By the definition of the vector~$V_{l,j}$, we have
\begin{eqnarray*}
 V_{l,j} &=&  \left(
                  \begin{array}{c}
                   \delta(\delta(0,j_{l-1}j_{l-2}\cdots j_1),j_0) \\
                     \delta(\delta(1,j_{l-1}j_{l-2}\cdots j_1),j_0)-1 \\
 \delta(\delta(2,j_{l-1}j_{l-2}\cdots j_1),j_0)-2\\
\vdots\\
 \delta(\delta(m-1,j_{l-1}j_{l-2}\cdots j_1),j_0)-(m-1)\\
                  \end{array}
                \right)\\
   &=&  \left(
                  \begin{array}{c}
                   \delta(0,j_{l-1}j_{l-2}\cdots j_1)+\delta(l_0,j_0)-l_0 \\
                     \delta(1,j_{l-1}j_{l-2}\cdots j_1)-1+\delta(l_1,j_0)-l_1 \\
 \delta(2,j_{l-1}j_{l-2}\cdots j_1)-2+\delta(l_2,j_0)-l_2\\
\vdots\\
 \delta(m-1,j_{l-1}j_{l-2}\cdots j_1)-(m-1)+\delta(l_{m-1},j_0)-l_{m-1}\\
                  \end{array}
                \right)\\
   &=& V_{l-1,\frac{j-j_0}{k}}+\left(
                  \begin{array}{c}
                   \delta(l_0,j_0)-l_0 \\
                     \delta(l_1,j_0)-l_1 \\
 \delta(l_2,j_0)-l_2\\
\vdots\\
\delta(l_{m-1},j_0)-l_{m-1}\\
                  \end{array}
                \right),
\end{eqnarray*}
where~$l_i=\rem_m(\delta(i,j_{l-1}j_{l-2}\cdots j_1))$~for~$0\leq i\leq m-1$. Since $l_0,l_1,\cdots,l_{m-1}\in\Sigma_m$ and $j_0\in\Sigma_k$, there are at most~$m^{mk}$~different vectors
on the last equation. Hence, there exists an integer~$L$ such that for all~$l>L$ and $0\leq j\leq k^{l}-1$, there exist integers $l^{\prime}\leq L$ and $0\leq j^{\prime}\leq k^{l^{\prime}}-1$ satisfying that
$$V_{l,j}-V_{l-1,\frac{j-j_0}{k}}=V_{l^{\prime},j^{\prime}}
-V_{l^{\prime}-1,\frac{j^{\prime}-j_0^{\prime}}{k}}$$
where~$j_0,j_0^{\prime}$~are the least digits of the~$k$-ary
expansion of~$j$ and $j^{\prime}$ respectively.
Hence, each vector~$V_{l,j}$~is
a linear combination  of the vectors $V_{l^{\prime},j^{\prime}}$ with $l^{\prime}\leq L$ and $0\leq j^{\prime}\leq k^{l^{\prime}}-1$.
It implies that each sequence $\{\mathbf{i}(k^ln+j)\}_{n\geq 0}$ is
a linear combination of the sequences $\{\mathbf{i}(k^{l^{\prime}}n+j^{\prime})\}_{n\geq 0}$
with $l^{\prime}\leq L$ and $0\leq j^{\prime}\leq k^{l^{\prime}}-1$. Thus, the index sequence $\{\mathbf{i}(n)\}_{n\geq0}$~is~$k$-regular.
\end{proof}

By Theorem 2.5 in \cite{Jp}, if the  integer sequences $\{u(n)\}_{n\geq0}$ and $\{v(n)\}_{n\geq0}$ are both $k$-regular sequences, then $\{u(n)+v(n)\}_{n\geq0},\{au(n)\}_{n\geq0}$ and $\{u(n)v(n)\}_{n\geq0}$ are also $k$-regular. Hence, we have the following proposition.
\begin{proposition}\label{main2}
Take $\tau:q_i\rightarrow f(i)$~for every $i\geq0$, where
$f(i)$ is a polynomial with integer coefficients. If~$\sigma$~is a~$m$-periodic~$k$-uniform morphism, then the sequence~$\tau(\sigma^{\infty}(q_0))$
is $k$-regular.
\end{proposition}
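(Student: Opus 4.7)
The plan is to reduce this statement directly to Theorem \ref{index regular} together with the closure properties of $k$-regular sequences recalled just above the proposition. First I would observe that, by the definition of the index sequence, the fixed point $\sigma^{\infty}(q_0)$ equals $\{q_{\mathbf{i}(n)}\}_{n\geq 0}$, so applying the coding $\tau(q_i) = f(i)$ coordinatewise gives
\[
\tau(\sigma^{\infty}(q_0)) = \{f(\mathbf{i}(n))\}_{n\geq 0}.
\]
Thus the problem reduces to showing that the sequence $\{f(\mathbf{i}(n))\}_{n\geq 0}$ is $k$-regular whenever $\{\mathbf{i}(n)\}_{n\geq 0}$ is.

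Next I would write $f(x) = \sum_{j=0}^{d} a_j x^j$ with $a_j \in \mathbb{Z}$ and decompose
\[
f(\mathbf{i}(n)) = \sum_{j=0}^{d} a_j \,\mathbf{i}(n)^{j}.
\]
By Theorem \ref{index regular}, $\{\mathbf{i}(n)\}_{n\geq 0}$ is $k$-regular. Since the termwise product of two $k$-regular integer sequences is $k$-regular (Theorem 2.5 in \cite{Jp}), an induction on $j$ yields that each power $\{\mathbf{i}(n)^{j}\}_{n\geq 0}$ is $k$-regular. The closure of $k$-regular sequences under scalar multiplication and addition (also from Theorem 2.5 of \cite{Jp}) then implies that the finite linear combination $\sum_{j=0}^{d} a_j \,\mathbf{i}(n)^{j}$ is $k$-regular, which gives the claim.

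There is no genuine obstacle here; the proposition is essentially a corollary of Theorem \ref{index regular} packaged with the ring-type closure properties of $k$-regular sequences. The only thing to be slightly careful about is making sure $f$ has integer (and not merely rational) coefficients so that one stays inside the category of integer $k$-regular sequences for which the closure under products is being invoked, and noting that the constant term $a_0$ contributes a constant sequence, which is trivially $k$-regular.
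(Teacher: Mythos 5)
Your proof is correct and follows essentially the same route as the paper: identify $\tau(\sigma^{\infty}(q_0))$ with $\{f(\mathbf{i}(n))\}_{n\geq 0}$, invoke Theorem \ref{index regular} for the regularity of the index sequence, and then apply the closure of $k$-regular integer sequences under addition, scalar multiplication and termwise products (Theorem 2.5 of \cite{Jp}). You merely spell out the induction on the degree of $f$ that the paper leaves implicit.
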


\begin{proof}
Since $\sigma^{\infty}(q_0)=\{q_{\mathbf{i}(n)}\}_{n\geq0}$, we have
$$ \tau(\sigma^{\infty}(q_0))=\tau(\{q_{\mathbf{i}(n)}\}_{n\geq0})=\{f(\mathbf{i}(n))\}_{n\geq0}.$$
Hence, by Theorem \ref{index regular} and Theorem 2.5 in \cite{Jp}, both $\{\mathbf{i}(n)\}_{n\geq0}$ and $\{f(\mathbf{i}(n))\}_{n\geq0}$~are $k$-regular.
\end{proof}

The following example shows that the periodic condition in Theorem \ref{index regular} is necessary.
\begin{example}\label{ex7}
The index sequence of $\sigma_1:q_0\rightarrow q_0q_1$, $q_i\rightarrow q_{i-1}q_{i+1}~(i\geq1)$ is not $2$-regular. It can be generated by a DCAO in Figure \ref{AUT4}.
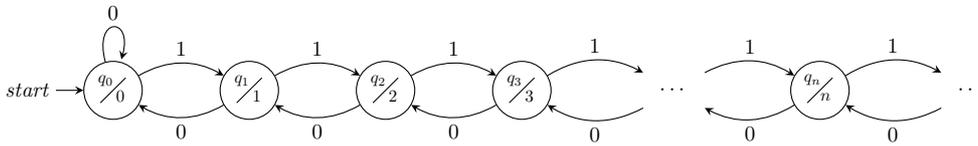
\begin{figure}[H]
\centering
\begin{tikzpicture}[scale=0.8, every node/.style={scale=0.8}, state/.style={scale=0.8, circle solidus,draw,
inner sep=1pt,minimum size=12mm},>=stealth,node distance=2.8cm,->,auto,black]
    \node[state,initial]  (q_0)                      {$q_0$ \nodepart{lower} $0$};
    \node[state] (q_1) [right of=q_0] {$q_1$ \nodepart{lower} $1$};
    \node[state] (q_2) [right of=q_1] {$q_2$ \nodepart{lower} $2$};
    \node[state] (q_3) [right of=q_2] {$q_3$ \nodepart{lower} $3$};
    \node(q_4) [inner sep=1pt,minimum size=10mm,right of=q_3,xshift=-0.3cm] {$\cdots$};
    \node[state] (q_n) [right of=q_4,xshift=0.2cm] {$q_n$ \nodepart{lower} $n$};
    \node(q_{n+1}) [inner sep=1pt,minimum size=10mm,,right of=q_n,xshift=-0.3cm] {$\cdots$};
    \path[->]
    		(q_0) edge [bend left]  node {$1$} (q_1)
                      edge [loop above] node {$0$} ()
    		(q_1) edge [bend left]   node {$0$} (q_0)
                      edge [bend left]  node {$1$} (q_2)
    		(q_2) edge [bend left]   node  {$1$} (q_3)	
    		          edge [bend left] node {$0$} (q_1)
            (q_3) edge [bend left]   node  {$1$} (q_4)	
    		          edge [bend left] node {$0$} (q_2)
            (q_4) edge [bend left]  node  {$1$} (q_n)
                      edge [bend left] node {$0$} (q_3)
            (q_n) edge [bend left] node {$1$} (q_{n+1})
                      edge [bend left] node {$0$} (q_4)
            (q_{n+1}) edge [bend left] node {$0$} (q_n);
    \end{tikzpicture}
\caption{A DCAO generating the index sequence of $\sigma_1$.}
\label{AUT4}
\end{figure}
\end{example}

Assume that the index sequence $\{\mathbf{i}(n)\}_{n\geq0}$ of $\sigma_1$~is a $2$-regular sequence, then $\{\rem_2(\mathbf{i}(n))\}_{n\geq0}$ is $2$-automatic, which implies that the set $\left\{\{\rem_2(\mathbf{i}(2^kn))\}_{n\geq0}: k\geq0\right\}$
is finite. Hence, there exist two integers $0<k_1<k_2$ such that $\rem_2(\mathbf{i}(2^{k_1}n))=\rem_2(\mathbf{i}(2^{k_2}n))$ for all $n\geq0$.
However, taking~$n=2^{k_1+1}-1$, we have $\rem_2(\mathbf{i}(2^{k_{1}}n))=\rem_2(\mathbf{i}(2^{k_{1}}(2^{k_{1}+1}-1)))=1$, but~$\rem_2(\mathbf{i}(2^{k_{2}}n))=\rem_2(\mathbf{i}(2^{k_{2}}(2^{k_{1}+1}-1)))=0$, which is a contradiction.

In fact, for every $n\geq0$, $$\mathbf{i}(n)=\left\{
       \begin{array}{ll}
         2s_2(n)-|(n)_2|, & \hbox{if $2s_2(n)>|(n)_2|$,} \\
         0, & \hbox{otherwise.}
       \end{array}
     \right.$$
That is, $\mathbf{i}(n)$ equals $0$ if the number of $0$'s is more than
$1$'s in the $(n)_2$, otherwise it equals the difference between the number of $1$'s and $0$'s in the $(n)_2$.

Similarly, the index sequence of $\sigma_2:q_0\rightarrow q_0q_1$, $q_i\rightarrow q_{i+1}q_{i-1}~(i\geq1)$ is not $2$-regular.

\medskip

Let $b_j\in\mathbb{N}$ and $\sigma:~q_i\rightarrow q_{i}q_{i+b_1}\cdots q_{i+b_{k-1}}$ be a $1$-periodic~$k$-uniform morphism. By Theorem \ref{index regular}, its index sequence is $k$-regular. For every~$i\geq0$,we define
\begin{equation}\label{1-periodic}
  \sigma:~q_i\rightarrow q_{a_0i}q_{a_1i+b_1}\cdots q_{a_{k-1}i+b_{k-1}}
\end{equation}
where~$a_j\in\mathbb{N}$. Then, we have the following proposition.
\begin{proposition}\label{lemma5555}
The index sequence~$\{\mathbf{i}(n)\}_{n\geq0}$~of $\sigma$ defined by (\ref{1-periodic}) is~$k$-regular.
\end{proposition}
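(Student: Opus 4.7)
The plan is to reduce everything to a single linear recurrence for $\mathbf{i}$ that follows directly from the shape of $\sigma$. Applying $\sigma$ to the fixed point $\sigma^{\infty}(q_0) = q_{\mathbf{i}(0)} q_{\mathbf{i}(1)} q_{\mathbf{i}(2)} \cdots$ and using that $\sigma$ is $k$-uniform, the image of the letter at position $n$ occupies positions $kn, kn+1, \ldots, kn+k-1$. Comparing this with the explicit form \eqref{1-periodic} (and setting $b_0 := 0$, which is forced by $\sigma(q_0)=q_0\cdots$) yields
\[
\mathbf{i}(kn + r) \;=\; a_r\,\mathbf{i}(n) + b_r \qquad (n \geq 0,\ 0 \leq r \leq k-1).
\]

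Next, I would prove by induction on $l$ that for every $l \geq 0$ and every $0 \leq j \leq k^{l}-1$ there exist integers $\alpha_{l,j}, \beta_{l,j}$ with
\[
\mathbf{i}(k^{l} n + j) \;=\; \alpha_{l,j}\,\mathbf{i}(n) + \beta_{l,j} \qquad (n \geq 0).
\]
The base case $l=0$ is trivial with $\alpha_{0,0}=1$, $\beta_{0,0}=0$. For the inductive step, given $0 \leq j' \leq k^{l+1}-1$, write $j' = kj + r$ with $0 \leq j \leq k^{l}-1$ and $0 \leq r \leq k-1$. Then $k^{l+1}n + j' = k(k^{l}n + j) + r$, so the recurrence and the inductive hypothesis give
\[
\mathbf{i}(k^{l+1} n + j') \;=\; a_r\,\mathbf{i}(k^{l} n + j) + b_r \;=\; (a_r\alpha_{l,j})\,\mathbf{i}(n) + (a_r\beta_{l,j} + b_r),
\]
closing the induction with explicit integer coefficients.

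The conclusion is immediate: every element of the $k$-kernel of $\{\mathbf{i}(n)\}_{n\geq0}$ lies in the $\mathbb{Z}$-module generated by the two sequences $\{\mathbf{i}(n)\}_{n\geq 0}$ and $\{1\}_{n\geq 0}$. Since $\mathbb{Z}$ is Noetherian, any submodule of this finitely generated module is itself finitely generated, so the module generated by the $k$-kernel is finitely generated. Hence $\{\mathbf{i}(n)\}_{n\geq 0}$ is $k$-regular by the definition recalled in the introduction. I do not expect a serious obstacle: the only mild care-points are the convention $b_0=0$ (ensuring the fixed point exists and the recurrence holds for $r=0$) and the base-$k$ decomposition $j' = kj+r$ used in the inductive step; both are routine.
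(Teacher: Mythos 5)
Your proof is correct and follows essentially the same route as the paper: both derive the one-step recurrence $\mathbf{i}(kn+j)=a_j\mathbf{i}(n)+b_j$ by comparing $\sigma(q_{\mathbf{i}(n)})$ with the block at positions $kn,\ldots,kn+k-1$ of the fixed point. The only difference is that the paper stops there and appeals to the general fact (cited from Allouche--Shallit) that such recurrences imply regularity, whereas you explicitly carry out the induction on $l$ and the finite-generation argument that this citation encapsulates.
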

\begin{proof}
By the definition of $\sigma$, we have $\sigma(q_{\mathbf{i}(n)})=q_{a_0\mathbf{i}(n)}q_{a_1\mathbf{i}(n)+b_1}\cdots q_{a_{k-1}\mathbf{i}(n)+b_{k-1}}$. Since $\sigma$ is $k$-uniform, we also have $\sigma(q_{\mathbf{i}(n)})=q_{\mathbf{i}(kn)}q_{\mathbf{i}(kn+1)}\cdots q_{\mathbf{i}(kn+k-1)}$. Hence, for every $0\leq j\leq k-1$,
$\mathbf{i}(kn+j)=a_j\mathbf{i}(n)+b_j$ for all~$n\geq0$, where
$b_0=0$, which completes our proof.
\end{proof}

For every $0\leq n\leq m-1$ and $i\geq0$, we define a morphism to be $\sigma$ by $$\sigma(q_{mi+n})=q_{a_{n,0}\cdot mi+t_{n,0}}q_{a_{n,1}\cdot mi+t_{n,1}}\cdots q_{a_{n,k-1}\cdot mi+t_{n,k-1}}$$
where $A=(a_{r,s})_{0\leq r\leq m-1,0\leq s\leq k-1}$ and $T=(t_{r,s})_{0\leq r\leq m-1,0\leq s\leq k-1}$\
$\in\mathbb{N}^{m\times k}$. Denote this morphism by $\sigma_{A,T}$ briefly.  If all elements of $A$ are 1, then the morphism $\sigma_{A,T}$ is a $m$-periodic~$k$-uniform morphism. Hence, by Theorem \ref{index regular}, its index sequence is $k$-regular. However, for a general matrix $A$, we do not know that whether the index sequence of this morphism $\sigma_{A,T}$
is $k$-regular or not.

\medskip

Until now, all the morphisms we considered are linear. The following example gives a non-linear morphism and shows that it is not $2$-regular.
\begin{example}\label{ex8}
Let $\alpha>1$ be a real number. The index sequence of $\sigma_2:q_i\rightarrow q_{i^{\alpha}}q_2~(i\geq0)$ is not $2$-regular.
\end{example}
In fact, let~$\{\mathbf{i}(n)\}_{n\geq0}$ be the index sequence of this morphism~$\sigma_2$. Then~$\mathbf{i}(0)=0,~\mathbf{i}(1)=2$, $\mathbf{i}(2n)=(\mathbf{i}(n))^{\alpha}$~and $\mathbf{i}(2n+1)=2$ for all $n\geq0$. Hence, $\mathbf{i}(2^k)=2^{\alpha^k}$ and
$$\frac{\log_2(\mathbf{i}(2^k))}{\log_2(2^k)}=\frac{\alpha^k}{k}\rightarrow\infty~~~~(k\rightarrow\infty).$$
Thus, by Theorem 2.10 in \cite{Jp}, the sequence~$\{\mathbf{i}(n)\}_{n\geq0}$~is not~$2$-regular.

\section{Regularity of the sequences generated by linear recurrence codings}\label{section}
Given an integer sequence $L=\{L_n\}_{n\geq0}$, we define a coding $\tau_{L}$ to be $\tau_{L}(q_i)=L_i$ for all $i\geq0$.
In particular, if $L=\{n\}_{n\geq0}$, then $\tau_{L}$ is the coding under consideration in Section $3$.
If there exist integers $p\geq0, C_0,C_1,C_2,\cdots,C_p$~satisfying that
$$L(n)=\sum_{i=1}^pC_iL(n-i)+C_0~~~(n\geq p),$$ then we say the sequence~$\{L_n\}_{n\geq0}$~satisfies a \emph{linear recurrence}.
In this section, we will consider the coding $\tau_{L}$ satisfying a linear recurrence and its corresponding sequences.

Let~$\{u(n)\}_{n\geq0}$ be an integer sequence. If there exist integers~$l\geq0$~and~$0\leq j\leq k^l-1$ such that for every
$l^{\prime}>l$ and $0\leq j^{\prime}\leq k^{l^{\prime}}-1$,
\begin{equation*}\label{indexsequence}
 u(k^{l^{\prime}}n+j^{\prime})=u(k^{l}n+j)+c_{l^{\prime},j^{\prime}}
\end{equation*}
where~$c_{l^{\prime},j^{\prime}}$~is a constant, depending on~$l^{\prime},j^{\prime}$, then the sequence $\{u(n)\}_{n\geq0}$ is called \emph{$(l,j)$-order recursive}. If there exist integers~$l\geq0$~and~$0\leq j\leq k^l-1$ satisfying that $\{u(k^{l}n+j):n\geq0\}\supset\mathbb{N}$, then the sequence $\{u(n)\}_{n\geq0}$ is called \emph{$(l,j)$-order complete}. For example, the sequence $\{S_2(n)\}_{n\geq0}$ is both $(l,j)$-order recursive and $(l,j)$-order complete with $l=j=0,c_{l^{\prime},j^{\prime}}=s_2(j^{\prime})$, since $S_2(2n)=S_2(n)$ and $S_2(2n+1)=S_2(n)+1$.

Note that, for every $l\geq0$~and~$0\leq j\leq k^l-1$, a $(l,j)$-order recursive integer sequence is $k$-regular. In particular, we have the following theorem.
\begin{theorem}\label{thm4.2}
Let $k\geq2$ 
be an integer and $\{u(n)\}_{n\geq0}$ be a non-negative integer sequence. If $\{u(n)\}_{n\geq0}$ is both $(0,0)$-order recursive and $(0,0)$-order complete, then $\{L_n\}_{n\geq0}$~satisfies a linear recurrence if and only if $\{L_{u(n)}\}_{n\geq0}$ is $k$-regular.
\end{theorem}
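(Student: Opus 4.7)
The plan is to prove both implications by translating the $(0,0)$-recursiveness identity $u(k^{l'}n+j') = u(n) + c_{l',j'}$ (setting $n=0$ shows $c_{l',j'}=u(j')-u(0)$) into a correspondence between the $k$-kernel of $\{L_{u(n)}\}$ and the shifted sequences $a_c := \{L_{u(n)+c}\}_{n\geq 0}$, and then to exploit $(0,0)$-completeness (i.e.\ surjectivity of $u$) to transfer identities from the image of $u$ to all of $\mathbb{N}$.

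For the forward direction, suppose $L(n) = \sum_{i=1}^p C_i L(n-i) + C_0$ for $n\geq p$. First I would use a routine induction on $c$ to produce integers $\beta_{c,0},\ldots,\beta_{c,p-1},\gamma_c$ with $L(m+c) = \sum_{i=0}^{p-1}\beta_{c,i}L(m+i)+\gamma_c$ for every $m\geq 0$. Since each element of the $k$-kernel equals some $a_{c_{l,j}}$ by $(0,0)$-recursiveness, substituting $m=u(n)$ in the iterated identity expresses it as an integer linear combination of the $p+1$ fixed sequences $\{L_{u(n)+i}\}_n$ (for $0\leq i\leq p-1$) together with the constant sequence $\mathbf{1}$. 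This places the $k$-kernel inside a finitely generated $\mathbb{Z}$-module and yields $k$-regularity.

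Conversely, assume $\{L_{u(n)}\}$ is $k$-regular and let $M'$ be the finitely generated $\mathbb{Z}$-module spanned by its $k$-kernel. By $(0,0)$-completeness, for each $c\geq 0$ I can pick $j_c\geq 0$ with $u(j_c)=u(0)+c$; choosing $l$ with $k^l>j_c$, the $(0,0)$-recursiveness gives $a_c = \{L_{u(k^l n+j_c)}\}_n \in M'$. Let $M''\subseteq M'$ be the $\mathbb{Z}$-submodule generated by $\{a_c:c\geq 0\}$; as a subgroup of a finitely generated torsion-free abelian group, $M''\cong \mathbb{Z}^r$ for some $r\geq 0$. The crux — and the main obstacle — is to show that the shift $S(a_c):=a_{c+1}$ extends to a well-defined $\mathbb{Z}$-linear endomorphism of $M''$: if $\sum_c \lambda_c a_c = 0$, then $\sum_c \lambda_c L_{u(n)+c}=0$ for every $n$, and $(0,0)$-completeness upgrades this to $\sum_c \lambda_c L_{m+c}=0$ for every $m\geq 0$, so re-indexing $m\mapsto m+1$ delivers $\sum_c \lambda_c a_{c+1}=0$. (Without completeness the relation would only hold for $m\in u(\mathbb{N})$, which need not be stable under $m\mapsto m+1$.) Once $S$ is well-defined on $M''\cong\mathbb{Z}^r$ it is represented by an integer matrix, so its characteristic polynomial $\chi_S(x)= x^r - \alpha_1 x^{r-1} - \cdots - \alpha_r$ is monic with integer coefficients; Cayley--Hamilton applied to $a_0$ gives $a_r = \sum_{i=1}^r \alpha_i a_{r-i}$, hence $L_{u(n)+r} = \sum_{i=1}^r \alpha_i L_{u(n)+r-i}$ for all $n\geq 0$. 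A final appeal to completeness promotes this to the integer-coefficient recurrence $L(n) = \sum_{i=1}^r \alpha_i L(n-i)$ valid for $n\geq r$ (with $C_0 = 0$). The naive alternative — applying Cayley--Hamilton on $M''\otimes\mathbb{Q}$ — would only yield rational coefficients whose denominators generally cannot be absorbed into a monic recurrence, so working over $\mathbb{Z}$ throughout is essential.
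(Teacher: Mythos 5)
Your proof is correct, and while the forward direction coincides with the paper's argument (iterate the recurrence to write $L_{m+c}$ as an integer combination of $L_m,\dots,L_{m+p-1}$ and a constant, then substitute $m=u(n)$; note you do implicitly need the paper's observation that non-negativity of $u$ forces every $c_{l,j}\geq 0$, so that $a_{c_{l,j}}$ makes sense), your converse takes a genuinely different route. The paper argues directly: it picks $r$ maximizing $c_{1,r}$ (which is positive by completeness), takes the single kernel element indexed by $j=[r^l]_k$ with $l$ larger than all generator levels, expands it over the finitely many kernel generators $\{L_{u(k^{l_s}n+j_s)}\}_n=\{L_{u(n)+c_{l_s,j_s}}\}_n$, and observes that all the resulting shifts $lc_{1,r}-c_{l_s,j_s}$ are strictly positive, so that substituting $m=u(n)+lc_{1,r}$ and invoking completeness yields a recurrence for all large $m$ in one stroke. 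You instead package all the shifted sequences $a_c$ into a submodule $M''\cong\mathbb{Z}^r$ of the kernel module, prove the shift $a_c\mapsto a_{c+1}$ is a well-defined $\mathbb{Z}$-linear endomorphism (using completeness to upgrade relations from $u(\mathbb{N})$ to $\mathbb{N}$, which is exactly the right use of that hypothesis), and extract a monic integer recurrence from Cayley--Hamilton. Your version is more structural: it delivers a homogeneous recurrence of length equal to the rank of $M''$ and makes the integrality of the coefficients transparent, whereas the paper gets integrality implicitly from choosing module generators inside the kernel; the paper's version is more elementary and produces the recurrence from a single explicit kernel identity without needing the shift operator or any linear algebra. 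Both hinge on the same two pillars --- completeness to pass between $u(\mathbb{N})$ and $\mathbb{N}$, and finite generation of the kernel module over $\mathbb{Z}$ --- so I consider your argument a valid alternative proof.
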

\begin{remark}
In fact, if there exist integers $l\geq0, 0\leq j\leq k^{l}-1$ satisfying that $\{u(n)\}_{n\geq0}$ is $(l,j)$-order recursive and $\{L_n\}_{n\geq0}$~satisfies a linear recurrence, then $\{L_{u(n)}\}_{n\geq0}$ is also $k$-regular.
\end{remark}
\begin{remark}
Theorem \ref{thm4.2} tells us that the images of some regular sequences under linear recurrence codings are also regular.
\end{remark}

To prove Theorem \ref{thm4.2}, we need the following two lemmas.
\begin{lemma}\label{recursive}
If $\{u(n)\}_{n\geq0}$ is $(0,0)$-order recursive, then for every
$l>0$ and $0\leq j\leq k^{l}-1$, $$c_{l,j}=\sum_{s=0}^{l-1}c_{1,j_s}$$ where $(j)^{l}_{k}=j_{l-1}\cdots j_{1}j_{0}.$
\end{lemma}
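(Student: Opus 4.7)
The plan is to prove the claim by induction on $l$, peeling off one base-$k$ digit at a time via the $(0,0)$-recursive hypothesis applied with two different values of $l'$.

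First I would unwind the definition of $(0,0)$-order recursive, which says that for every $l'>0$ and every $0\le j'\le k^{l'}-1$, one has $u(k^{l'} n + j') = u(n) + c_{l',j'}$ for all $n\ge 0$. The base case $l=1$ is then immediate, since $(j)_k^1 = j_0 = j$, so the desired identity $c_{1,j}=c_{1,j_0}$ is trivial.

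For the inductive step, assume the formula holds at level $l-1$ and fix $l\ge 2$ and $0\le j\le k^l-1$ with base-$k$ expansion $(j)_k^l = j_{l-1}\cdots j_1 j_0$. Setting $j^{*}:=(j-j_0)/k$, whose length-$(l-1)$ padded expansion is $(j^*)_k^{l-1}=j_{l-1}\cdots j_1$, I would rewrite
\[
k^l n + j \;=\; k\bigl(k^{l-1} n + j^{*}\bigr) + j_0.
\]
Applying the $(0,0)$-recursive hypothesis with $l'=1$ and argument $k^{l-1}n+j^{*}$ gives
\[
u(k^l n + j) \;=\; u(k^{l-1}n + j^{*}) + c_{1,j_0},
\]
and applying it again with $l'=l-1$ and argument $n$ gives
\[
u(k^{l-1} n + j^{*}) \;=\; u(n) + c_{l-1,j^{*}}.
\]
Comparing these two equalities with the direct application $u(k^l n+j) = u(n) + c_{l,j}$, which holds because the constants $c_{l',j'}$ are uniquely determined once $u$ is non-trivial (substituting, e.g., $n=0$ pins down each $c_{l',j'}$), I obtain the one-step recurrence
\[
c_{l,j} \;=\; c_{1,j_0} + c_{l-1,j^{*}}.
\]

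Finally, by the induction hypothesis applied to $j^{*}$, whose digits satisfy $j^{*}_s = j_{s+1}$ for $0\le s\le l-2$, I have $c_{l-1,j^{*}} = \sum_{s=0}^{l-2} c_{1,j_{s+1}} = \sum_{s=1}^{l-1} c_{1,j_s}$, and combining gives the claim $c_{l,j}=\sum_{s=0}^{l-1} c_{1,j_s}$. The only subtle point, where I would be most careful, is the well-definedness of the constants $c_{l',j'}$: this is where the $(0,0)$-order completeness would be needed to ensure $u$ is not constant on any residue class and hence the $c_{l',j'}$ are uniquely identified; if completeness is not invoked here, one can still extract each $c_{l',j'}$ by evaluating at $n=0$ once a convention for $u(0)$ is fixed. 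Either way the argument is a routine digit-by-digit bookkeeping.
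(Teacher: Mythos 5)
Your proof is correct and takes essentially the same route as the paper, whose entire proof is the one-line remark that the formula is an immediate consequence of $u(kn+j)=u(n)+c_{1,j}$ for $0\leq j\leq k-1$; your induction on $l$, peeling off the least significant digit via $k^{l}n+j=k(k^{l-1}n+j^{*})+j_{0}$, is just that iteration written out carefully. Your side remark on the well-definedness of the constants $c_{l',j'}$ (they are pinned down by evaluating at $n=0$) is a reasonable precaution but adds nothing beyond what the definition already forces.
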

\begin{proof}
This is an immediate consequence of the fact that $u(kn+j)=u(n)+c_{1,j}$ for $0\leq j\leq k-1.$
\end{proof}
\begin{lemma}\label{complete}
If $\{u(n)\}_{n\geq0}$ is $(0,0)$-order recursive and $(0,0)$-order complete, then $\max\{c_{1,j}:0\leq j\leq k-1\}>0$. Moreover, if $\{u(n)\}_{n\geq0}$ is non-negative, then, $c_{l,j}\geq0$ for every $l>0$ and $0\leq j\leq k^{l}-1.$
\end{lemma}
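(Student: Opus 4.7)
The plan is to dispatch both claims by contradiction, using Lemma \ref{recursive} to reduce the problem to sign information about the constants $c_{1,j}$ with $0\le j\le k-1$.

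For the first claim, I will suppose that $c_{1,j}\le 0$ for every $0\le j\le k-1$. By Lemma \ref{recursive}, summing over base-$k$ digits gives $c_{l,j}\le 0$ for every $l\ge 0$ and every $0\le j\le k^{l}-1$. Specializing the $(0,0)$-order recursive relation $u(k^{l}n+j)=u(n)+c_{l,j}$ at $n=0$ then yields $u(j)\le u(0)$ for every such $j$. Given any non-negative integer $m$, I may choose $l$ large enough that $k^{l}>m$ and take $j=m$, obtaining $u(m)\le u(0)$. Hence the range of $u$ is bounded, contradicting the assumption $\{u(n):n\ge 0\}\supset\mathbb{N}$ from $(0,0)$-order completeness.

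For the second claim, Lemma \ref{recursive} reduces the task to showing $c_{1,j}\ge 0$ for every $0\le j\le k-1$. The case $j=0$ is free: plugging $n=0$ into $u(kn)=u(n)+c_{1,0}$ forces $c_{1,0}=0$. For $j\ge 1$, I will argue by contradiction, supposing $c_{1,j}<0$ (so $c_{1,j}\le -1$ since the sequence is integer-valued). Consider the integer
\[
j^{(l)}:=j\bigl(k^{l-1}+k^{l-2}+\cdots+k+1\bigr)=\frac{j(k^{l}-1)}{k-1},
\]
whose length-$l$ base-$k$ expansion is the repeated digit string $jj\cdots j$, and which satisfies $0\le j^{(l)}\le k^{l}-1$. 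Lemma \ref{recursive} then gives $c_{l,j^{(l)}}=l\,c_{1,j}$, so setting $n=0$ in the recurrence produces $u(j^{(l)})=u(0)+l\,c_{1,j}$, which tends to $-\infty$ as $l\to\infty$ and eventually violates non-negativity.

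The main obstacle, rather minor here, is the base-$k$ bookkeeping needed to produce an index $j^{(l)}$ that accumulates the digit $j$ exactly $l$ times, so that Lemma \ref{recursive} delivers the clean identity $c_{l,j^{(l)}}=l\,c_{1,j}$; once that is in place, both contradictions are automatic.
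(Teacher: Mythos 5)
Your proposal is correct and follows essentially the same route as the paper: both parts argue by contradiction, use Lemma \ref{recursive} to express $c_{l,j}$ as a digit sum of the $c_{1,j_s}$, and in the second part evaluate $u$ at the integer whose base-$k$ expansion is the digit $j$ repeated $l$ times to force $u(0)+l\,c_{1,j}<0$. The only cosmetic differences are your explicit remark that $c_{1,0}=0$ and your specialization to $n=0$ in the first part, neither of which changes the argument.
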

\begin{proof}
If $\max\{c_{1,j}:0\leq j\leq k-1\}\leq 0$, then $c_{1,j}\leq0$ for all $0\leq j\leq k-1$. By Lemma \ref{recursive}, we have $c_{l,j}\leq0$ for all $l>0$ and $0\leq j\leq k^{l}-1$. Hence, if $(n)_{k}=j_{l-1}\cdots j_{1}j_{0}$, then $u(n)=u(0)+\sum_{s=0}^{l-1}c_{1,j_s}\leq u(0)$, which contradicts the $(0,0)$-order completeness of $\{u(n)\}_{n\geq0}$.

If there exists an integer $0\leq r\leq k-1$ such that $c_{1,r}<0$, choosing a nature number $l$ such that $u(0)+lc_{1,r}<0$, then $u(n)=u(0)+\sum_{s=0}^{l-1}c_{1,r}< 0$ where $(n)_{k}=r^l$. Since $\{u(n)\}_{n\geq0}$ is a non-negative sequence, it implies that $c_{1,j}\geq0$ for every $0\leq j\leq k-1$. By Lemma \ref{recursive}, $c_{l,j}\geq0$, for every $l>0$ and $0\leq j\leq k^{l}-1.$
\end{proof}
Now, we are going to prove Theorem \ref{thm4.2}.
\begin{proof}[Proof of Theorem \ref{thm4.2} ]
Assume that the sequence $\{L_n\}_{n\geq0}$~satisfies a linear recurrence, i.e.,  there exists an integer~$p\geq1$~such that $L_n=\sum_{i=1}^pC_iL_{n-i}+C_0~(n\geq p)$, where~$C_i~(0\leq i\leq p)$~are constants. Then, for every integer $c\geq0$, $L_{n+c}=\sum_{i=1}^pC^{\prime}_i(c)L_{n-i}+C^{\prime}_0(c)~(n\geq p)$, where~$C^{\prime}_i(c)~(0\leq i\leq p)$~are constants.
Hence, by Lemma \ref{complete} and the $(0,0)$-order recursive relation, for~$l\geq 0$ and $0\leq j\leq k^{l}-1$,
we have
$$L_{u(k^{l}n+j)}=L_{u(n)+c_{l,j}}
   = \sum_{i=1}^pC_iL_{u(n)+c_{l,j}-i}+C_0
   = \sum_{i=1}^pC_i^{\prime}(l,j)L_{u(n)-i}+C^{\prime}_0(l,j),
$$
where for~$C_i^{\prime}(l,j)~(0\leq i\leq p)$~are all constants, depending on $l$ and $j$. Thus,  each sequence $\{L_{u(k^{l}n+j)}\}_{n\geq0}$
is a combination of the sequences $\{L_{u(n)-1}\}_{n\geq0}$,
$\{L_{u(n)-2}\}_{n\geq0},\cdots,\{L_{u(n)-p}\}_{n\geq0}$~and
the constant sequence, which implies that the sequence~$\{L_{u(n)}\}_{n\geq0}$~is~$k$-regular.

Conversely, if $\{L_{u(n)}\}_{n\geq0}$~is a $k$-regular sequence, then the module generated by its~$k$-kernel is  generated by $\{L_{u(k^{l_s}n+j_s)}\}_{n\geq0}~(1\leq s\leq M)$. Assume $l>\max\{l_s:1\leq s\leq M\}$ and $c_{1,r}=\max\{c_{1,j}: 0\leq j\leq k-1\}$ for some $0\leq r\leq k-1$. By Lemma \ref{complete}, $c_{1,r}>0$. Taking $j=[r^{l}]_k=r\cdot\frac{k^{l}-1}{k-1}$. Then, by Lemma \ref{recursive}, we have $u(k^{l}n+j)=u(n)+l c_{1,r}.$ Hence, we have
$$L_{u(n)+lc_{1,r}}=L_{u(k^{l}n+j)}
   = \sum_{s=1}^{M}C_s(l,j)L_{u(k^{l_{s}}n+j_{s})}
   = \sum_{s=1}^{M}C_s(l,j)L_{u(n)+c_{l_s,j_s}}.
$$
By the choice of $l$, note that  for all $1\leq s\leq M$ and $lc_{1,r}> c_{l_s,j_s}$. Let $m=u(n)+lc_{1,r}$. Then, $$L_{m}=\sum_{s=1}^{M}C_s(l,j)L_{m-(lc_{1,r}-c_{l_s,j_s})}.$$
Since $u(n)$ is $(0,0)$-order complete, $m$ ranges all natural numbers except finitely many terms. Hence, $\{L_n\}_{n\geq0}$~satisfies a linear recurrence.
\end{proof}

\begin{example}\label{ex9}
Let $\sigma:q_i\rightarrow q_iq_{i+1}$ and $\{F_n\}_{n\geq0}$~be the Fibonacci numbers defined by~$F_0=1,F_1=1$ and $F_n=F_{n-1}+F_{n-2}$ for every $n\geq2$. Note that the index sequence $\{\mathbf{i}(n)\}_{n\geq0}$ of $\sigma$ is $(0,0)$-order recursive and $(0,0)$-order complete, $\{F_n\}_{n\geq0}$ satisfies a linear recurrence. Hence, the sequence $\tau_{F}(\sigma^{\infty}(q_0))=\{F_{\mathbf{i}(n)}\}_{n\geq0}$ is $2$-regular.
\end{example}

It is clearly that the index sequence $\{\mathbf{i}(n)\}_{n\geq0}$ is a $2$-regular sequence which can be generated by $\mathbf{i}(2n)=\mathbf{i}(n)$ and $\mathbf{i}(2n+1)=\mathbf{i}(n)+1$ with $\mathbf{i}(0)=0$. Under the coding $\rho:i\rightarrow F_i$, we obtain a new $2$-regular sequence $\rho(\mathbf{i}(n))=\{F_{\mathbf{i}(n)}\}_{n\geq0}=\tau_{F}(\sigma^{\infty}(q_0))$ which can be generated by the formulas:
$$ F_{\mathbf{i}(0)}=F_{\mathbf{i}(1)}=1,F_{\mathbf{i}(2n)}=F_{\mathbf{i}(n)},F_{\mathbf{i}(4n+1)}=F_{\mathbf{i}(2n+1)},F_{\mathbf{i}(4n+3)}=F_{\mathbf{i}(2n+1)}+F_{\mathbf{i}(n)}.$$

\bigskip

Let $\sigma:~q_i\rightarrow q_{i+t_0}q_{i+t_1}\cdots q_{i+t_{k-1}}$ be a~$1$-periodic~$k$-uniform morphism with $t_0=0,t_i\in\mathbb{N}~(0\leq i\leq k-1)$.  Assume that $\{\mathbf{i}(n)\}_{n\geq0}$~is the index sequence of $\sigma$, then $\mathbf{i}(kn+j)=\mathbf{i}(n)+t_{j}$ for all $0\leq j\leq k-1$. Note that $\{\mathbf{i}(n)\}_{n\geq0}$ is $(0,0)$-order recursive with $c_{1,j}=t_j$ and $\max\{c_{1,j}:0\leq j\leq k-1\}>0$. Hence, by Theorem \ref{thm4.2}, we have the following corollary.
\begin{cor}\label{cor1}
If there exist two integers $1\leq i,j\leq k-1$ such that $(t_i,t_j)=1$, then the sequence $\{L_n\}_{n\geq0}$~satisfies a linear recurrence  if and only if the sequence  $\tau_{L}(\sigma^{\infty}(q_0))$ is $k$-regular.
\end{cor}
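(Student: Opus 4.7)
The plan is to apply Theorem \ref{thm4.2} to the sequence $u(n) = \mathbf{i}(n)$ and the coding data $\{L_n\}_{n \geq 0}$. Noting that $\tau_L(\sigma^{\infty}(q_0)) = \{L_{\mathbf{i}(n)}\}_{n \geq 0}$, and recalling from the paragraph preceding the corollary that $\{\mathbf{i}(n)\}_{n \geq 0}$ is $(0,0)$-order recursive with $c_{1,j} = t_j$ and that $\max\{c_{1,j}\} > 0$ (forced by the existence of $i,j$ with $\gcd(t_i,t_j)=1$, which precludes both being zero), the only ingredient left to check before invoking Theorem \ref{thm4.2} is that $\{\mathbf{i}(n)\}_{n \geq 0}$ is $(0,0)$-order complete.

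To that end, I would first unfold $\mathbf{i}(n)$ explicitly. By Lemma \ref{recursive}, if $(n)_k = j_{l-1} \cdots j_1 j_0$ then
$$\mathbf{i}(n) = \mathbf{i}(0) + \sum_{s=0}^{l-1} c_{1, j_s} = \sum_{s=0}^{l-1} t_{j_s},$$
since $\mathbf{i}(0) = 0$. Because $t_0 = 0$, leading-zero padding of $(n)_k$ does not change $\mathbf{i}(n)$, and conversely any non-negative integer combination $a_1 t_1 + \cdots + a_{k-1} t_{k-1}$ can be realized as some $\mathbf{i}(n)$ by choosing $n$ whose base-$k$ expansion contains the digit $s$ exactly $a_s$ times, for $1 \leq s \leq k-1$. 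Thus the range of $\mathbf{i}$ coincides with the numerical semigroup $\langle t_1, \ldots, t_{k-1}\rangle$ generated by the nonzero $t_s$ under addition.

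The key step is then the classical Frobenius coin theorem: since some pair $t_i, t_j$ satisfies $\gcd(t_i, t_j) = 1$ with $t_i, t_j \geq 1$, every integer $N \geq (t_i-1)(t_j-1)$ lies in $\langle t_i, t_j \rangle \subseteq \langle t_1, \ldots, t_{k-1}\rangle$, so every sufficiently large $N$ is attained by $\mathbf{i}$. (The degenerate case in which one of $t_i, t_j$ equals $1$ gives full coverage directly.) After this, Theorem \ref{thm4.2} delivers both implications of the corollary at once.

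The subtle point I expect to handle carefully is the gap between the literal definition of $(0,0)$-order completeness (every $N \in \mathbb{N}$ appears) and what the Frobenius argument actually yields (cofinitely many appear). Inspecting the proof of Theorem \ref{thm4.2}, completeness is only used in the converse direction to guarantee that $m = \mathbf{i}(n) + l c_{1,r}$ ranges over all natural numbers except finitely many as $n$ varies; a cofinite range for $\mathbf{i}$ suffices for exactly this conclusion. Alternatively, the values of $\{L_N\}$ at the finitely many indices missed by $\mathbf{i}$ are irrelevant to the sequence $\{L_{\mathbf{i}(n)}\}$ and may be redefined freely without affecting $k$-regularity on either side, so no loss of generality is incurred.
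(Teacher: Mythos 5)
Your proposal is correct and takes essentially the same route as the paper's proof: both reduce the corollary to verifying that the index sequence $\{\mathbf{i}(n)\}_{n\geq0}$ is $(0,0)$-order complete (up to finitely many terms) and then invoke Theorem \ref{thm4.2}. The only cosmetic difference is that the paper proves the Frobenius-type representability claim from scratch via a complete-residue-system argument where you cite the classical coin theorem, and the paper likewise concludes only ``complete except finitely many terms,'' so your explicit remark about why cofinite coverage suffices addresses exactly the gap the paper leaves implicit.
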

\begin{proof}
Since $\tau_{L}(\sigma^{\infty}(q_0))=\{\tau_{L}(q_{\mathbf{i}(n)})\}_{n\geq0}=\{L_{\mathbf{i}(n)}\}_{n\geq0}$, by Theorem \ref{thm4.2}, it suffices to show that $\{\mathbf{i}(n)\}_{n\geq0}$ is $(0,0)$-order complete. Assume that $0<t_i=p<q=t_j$ satisfies $(p,q)=1$ for some $0\leq i,j\leq k-1.$ Now, we claim that for every large integer $m\geq0$, there exist integers $c_1,c_2\geq0$ such that $m=c_1 p+c_2q$.

Assume $m=pn+i$ with $0\leq i\leq p-1$, then $m=p(n-j)+(pj+i)$ for every $j\geq 0$. Since $(p,q)=1$, $\{pj+i:0\leq j\leq q-1\}$ forms a complete system of incongruent residues $(\bmod~q$). Hence, there exist integers $0\leq j_1\leq q-1$ and $j_2\geq0$ such that $pj_1+i=qj_2.$ Thus, $m=(n-j_1) p+j_2q$ which completes the claim.

For every $0\leq i,j\leq k-1$, let $n=[i^{c_1}j^{c_2}]_k$. Then by Lemma \ref{recursive}, $\mathbf{i}(n)=\mathbf{i}(0)+c_1t_i+c_2t_j=\mathbf{i}(0)+c_1 p+c_2q.$ Hence, $\{\mathbf{i}(n)\}_{n\geq0}$~is $(0,0)$-order complete except finitely many terms.
\end{proof}
\begin{remark}
The condition of Corollary \ref{cor1} can be weakened by the condition that the greatest common factor of all nonzero integers $t_i~(0\leq i\leq k-1)$ is $1$.
\end{remark}

Let $\sigma:~q_i\rightarrow q_0q_{i+t_1}q_{i+t_2}\cdots q_{i+t_{k-1}}~(i\geq0)$ be a~$k$-uniform morphism,
where~$t_i\geq0~(1\leq i\leq k-1)$~are integers. Assume that~$\{\mathbf{i}(n)\}_{n\geq0}$~is the index sequence of this morphism, then $\mathbf{i}(kn)=0$~and~$\mathbf{i}(kn+j)=\mathbf{i}(n)+t_j~(1\leq j\leq k-1)$ for every $n\geq0$. Note that the sequence $\{\mathbf{i}(n)\}_{n\geq0}$ is not $(0,0)$-order recursive.

The following proposition shows that if $\{u(n)\}_{n\geq0}$ is not $(0,0)$-order recursive, then, Theorem \ref{thm4.2} also holds.

\begin{proposition}\label{thm}
If there exist two integers $1\leq i,j\leq k-1$ such that $(t_i,t_j)=1$, then the sequence $\{L_n\}_{n\geq0}$~satisfies a linear recurrence  if and only if the sequence  $\tau_{L}(\sigma^{\infty}(q_0))=\{L_{\mathbf{i}(n)}\}_{n\geq0}$ is $k$-regular.
\end{proposition}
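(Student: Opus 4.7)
The plan is to mirror the proof of Theorem \ref{thm4.2}, with two adjustments needed to handle the failure of the $(0,0)$-order recursive property: a direct description of the $k$-kernel of $\{\mathbf{i}(n)\}_{n\ge 0}$, and an appeal to the Chicken McNugget theorem to replace the $(0,0)$-order completeness argument of Lemma \ref{complete}. First I would compute the $k$-kernel from the relations $\mathbf{i}(kn)=0$ and $\mathbf{i}(kn+j)=\mathbf{i}(n)+t_{j}$ for $1\le j\le k-1$. Iterating, if $(j)_{k}^{l}=j_{l-1}\cdots j_{0}$ has all $l$ digits nonzero then $\mathbf{i}(k^{l}n+j)=\mathbf{i}(n)+\sum_{s=0}^{l-1}t_{j_{s}}$, whereas if $j$ has any zero digit then $\mathbf{i}(k^{l}n+j)$ collapses to a constant in $n$ (by the $\mathbf{i}(kn)=0$ rule as soon as the argument becomes divisible by $k$). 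Consequently every element of the $k$-kernel of $\{L_{\mathbf{i}(n)}\}_{n\ge 0}$ is either a constant sequence or a shifted sequence $S_{c}:=\{L_{\mathbf{i}(n)+c}\}_{n\ge 0}$ for some $c\ge 0$ lying in the numerical semigroup generated by $t_{1},\dots,t_{k-1}$.

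For the implication $(\Rightarrow)$, given the linear recurrence $L_{n}=\sum_{i=1}^{p}C_{i}L_{n-i}+C_{0}$ ($n\ge p$), the point is that for $c\ge p$ it applies term-by-term to $L_{\mathbf{i}(n)+c}$ (since $\mathbf{i}(n)\ge 0$), giving $S_{c}=\sum_{i=1}^{p}C_{i}S_{c-i}+C_{0}\mathbf{1}$. By induction on $c$, the $\mathbb{Z}$-module generated by all $S_{c}$ together with the constant sequence $\mathbf{1}$ is generated by the finite set $\{S_{0},S_{1},\dots,S_{p-1},\mathbf{1}\}$, so every kernel element of $\{L_{\mathbf{i}(n)}\}_{n\ge 0}$ lies in a finitely generated $\mathbb{Z}$-module, and $\{L_{\mathbf{i}(n)}\}_{n\ge 0}$ is $k$-regular.

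For the implication $(\Leftarrow)$, suppose the $k$-kernel module is generated by $\{L_{\mathbf{i}(k^{l_{s}}n+j_{s})}\}_{n\ge 0}$ for $1\le s\le M$. Set $t_{\max}=\max\{t_{1},\dots,t_{k-1}\}>0$, fix $r$ with $t_{r}=t_{\max}$, and choose $l>\max_{s}l_{s}$ with $j=[r^{l}]_{k}$, so that $\mathbf{i}(k^{l}n+j)=\mathbf{i}(n)+lt_{\max}$. Expanding in the generating set and collecting constant and shifted terms via the kernel description yields
\[
L_{\mathbf{i}(n)+lt_{\max}}=C^{*}+\sum_{s'}\alpha_{s'}L_{\mathbf{i}(n)+c_{s'}}\qquad(n\ge 0),
\]
with each $lt_{\max}-c_{s'}$ a positive integer (because $c_{s'}\le l_{s}t_{\max}<lt_{\max}$). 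The coprimality hypothesis $(t_{i},t_{j})=1$ now enters via the Chicken McNugget theorem applied to the numerical semigroup $\{at_{i}+bt_{j}:a,b\ge 0\}\subset\{\mathbf{i}(n):n\ge 0\}$ (since any $n$ whose base-$k$ digits lie in $\{i,j\}$ satisfies $\mathbf{i}(n)=at_{i}+bt_{j}$), forcing the range of $\mathbf{i}$ to be cofinite in $\mathbb{Z}_{\ge 0}$; hence $m=\mathbf{i}(n)+lt_{\max}$ ranges over all sufficiently large integers, and the displayed identity becomes a linear recurrence for $\{L_{m}\}_{m\ge 0}$. The main obstacle is precisely this last cofiniteness step, since it is essential that the Frobenius conclusion yield \emph{all} sufficiently large integers, not merely an infinite subset, so that the relation on $L_{m}$ is a genuine linear recurrence valid from some point onward.
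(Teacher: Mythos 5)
Your proposal is correct and follows essentially the same route as the paper: the same case analysis of the kernel (constant sequences when a zero digit appears, shifts $\{L_{\mathbf{i}(n)+c}\}$ otherwise), the same use of the linear recurrence to reduce all shifts to finitely many generators, and the same converse via $j=[r^{l}]_k$ with $t_r$ maximal; your appeal to the Chicken McNugget theorem is precisely the cofiniteness-of-the-range argument the paper carries out by hand in the proof of Corollary \ref{cor1}. The only (welcome) refinement is that you phrase the forward direction with the non-negative shifts $S_0,\dots,S_{p-1}$ rather than the paper's $\{L_{\mathbf{i}(n)-i}\}$, which sidesteps undefined negative indices.
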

\begin{proof}
For every $l\geq0,0\leq j\leq k^{l}-1$, let $(j)^{l}_{k}=j_{l-1}j_{l-2}\cdots j_0.$ If $j_i\neq0$ for all $0\leq i\leq l-1$, then $\mathbf{i}(k^{l}n+j)=\mathbf{i}(n)+\sum_{i=0}^{l-1}t_i$. Otherwise, assume that $s=\min\{0\leq i\leq l-1:j_i=0\}$, then $\mathbf{i}(k^{l}n+j)=\sum_{i=0}^{s}t_i$.

If the sequence~$\{L_n\}_{n\geq0}$
satisfies a linear recurrence, i.e., there exists an integer~$p\geq1$~satisfying that
$L_n=\sum_{i=1}^pC_iL_{n-i}+C_0~(n\geq p)$, where~$C_i~(0\leq i\leq p)$~are constants. Then, for every integer $c\geq0$, $L_{n+c}$ is a combination of $L_{n-i}$ with $1\leq i\leq p$. Hence, for every $l\geq0$ and $0\leq j\leq k^{l}-1$, the sequence $\{L_{\mathbf{i}(k^{l}n+j)}\}_{n\geq0}$ is a combination of $\{L_{\mathbf{i}(n)-i}\}_{n\geq0}~(1\leq i\leq p)$ and the constant sequence, which implies the sequence~$\{L_{\mathbf{i}(n)}\}_{n\geq0}$~is~$k$-regular.

The other direction is the same as the proof of Theorem \ref{thm4.2} and Corollary \ref{cor1}, so we omit it here.
\end{proof}

Let $\sigma$ be the morphism defined by formula (\ref{1-periodic}), i.e., $\sigma:~q_i\rightarrow q_{a_0i}q_{a_1i+b_1}\cdots q_{a_{k-1}i+b_{k-1}}$ with $a_i\in\mathbb{N}.$ Corollary \ref{cor1} and Proposition \ref{thm} have studied the cases $a_i\in\{0,1\}$ for all $0\leq i\leq k-1$. If there exists $0\leq i\leq k-1$ such that $a_i\geq2$, then the following example shows that Theorem \ref{thm4.2} does not hold.
\begin{example}\label{ex10}
For every $i\geq0$, let $\sigma:q_i\rightarrow q_{2i}q_{i+1},~\tau:q_i\rightarrow F_i$, where
$\{F_n\}_{n\geq0}$ is the Fibonacci numbers. Then the sequence $\tau(\sigma^{\infty}(q_0))$
is not $2$-regular.
\end{example}
Let $\{\mathbf{i}(n)\}_{n\geq0}$ be the index sequence of $\sigma$ in Example \ref{ex3}. Then $\mathbf{i}(2n)=2\mathbf{i}(n)$ and $\mathbf{i}(2n+1)=\mathbf{i}(n)+1$ with
$\mathbf{i}(0)=0$. Note that $F_{2n}=F^2_{n}+2F_{n}F_{n-1}$ and assume that $d_n=F_{\mathbf{i}(n)}$, then,
$$d_{2n}=F_{\mathbf{i}(2n)}=F_{2\mathbf{i}(n)}=F^2_{\mathbf{i}(n)}+2F_{\mathbf{i}(n)}F_{\mathbf{i}(n)-1}>F^2_{\mathbf{i}(n)}=d^2_n.$$
Since $d_2=2$ and
$$\frac{\log_2(d(2\cdot2^k))}{\log_2(2\cdot2^k)}>\frac{2^k \log_2 d_2}{k+1}=\frac{2^k}{k+1}\rightarrow\infty~~~~(k\rightarrow\infty).$$
By Theorem 2.10 in \cite{Jp}, the sequence~$\tau(\sigma^{\infty}(q_0))=\{F_{\mathbf{i}(n)}\}_{n\geq0}=\{d_n\}_{n\geq0}$~is not~$2$-regular .

\bigskip

In Theorem \ref{thm4.2}, if the sequence $\{L_n\}_{n\geq0}$ takes only finitely many values, then, we have the following proposition.
\begin{proposition}\label{main11}
Let $k\geq2$ 
be an integer and $\{u(n)\}_{n\geq0}$ be a non-negative integer sequence. If the sequence $\{u(n)\}_{n\geq0}$ is both $(0,0)$-order recursive and $(0,0)$-order complete, then the sequence $\{L_{u(n)}\}_{n\geq0}$ is $k$-automatic if and only if
the sequence $\{L_n\}_{n\geq0}$ is ultimately periodic.
\end{proposition}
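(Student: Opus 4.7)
The plan is to handle the two directions separately, using the characterization that $k$-automatic sequences are precisely $k$-regular sequences with finite range, together with finiteness of the $k$-kernel. For the ``if'' direction, since an ultimately periodic sequence $\{L_n\}$ satisfies the linear recurrence $L_n = L_{n-p}$ for all sufficiently large $n$ (with $p$ an eventual period), Theorem \ref{thm4.2} immediately yields that $\{L_{u(n)}\}_{n\geq 0}$ is $k$-regular. Because $\{L_n\}$ takes only finitely many values, so does $\{L_{u(n)}\}_{n\geq 0}$, and the Allouche--Shallit result recalled in the introduction ($k$-regular with finite range equals $k$-automatic) then gives the conclusion.

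For the ``only if'' direction, the strategy is to combine finiteness of the $k$-kernel with the $(0,0)$-order completeness of $u$ to extract a period of $\{L_n\}$ itself. First, iterating $u(kn+j)=u(n)+c_{1,j}$ and invoking Lemma \ref{recursive} gives $u(k^l n + j) = u(n) + c_{l,j}$, so every element of the $k$-kernel of $\{L_{u(n)}\}_{n\geq 0}$ has the shape $\{L_{u(n)+c_{l,j}}\}_{n \geq 0}$. Next, $u(0)=u(0)+c_{1,0}$ forces $c_{1,0}=0$, and combining Lemma \ref{complete} (so that $c_{l,j}\geq 0$ and hence $u(n)\geq u(0)$) with the completeness hypothesis $\{u(n):n\geq 0\}\supset \mathbb{N}$ forces $u(0)=0$. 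Setting $n=0$ then shows $c_{l,j}=u(j)$, so the attainable values of $c_{l,j}$ are exactly $\{u(j):j\geq 0\}$, which contains $\mathbb{N}$.

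Now assuming that $\{L_{u(n)}\}_{n\geq 0}$ is $k$-automatic, its $k$-kernel is finite, so the family $\{L_{u(n)+c}\}_{n \geq 0}$, as $c$ ranges over $\mathbb{N}$, contains only finitely many distinct sequences. By pigeonhole there exist $c_1<c_2$ in $\mathbb{N}$ with $L_{u(n)+c_1}=L_{u(n)+c_2}$ for every $n\geq 0$. For any $m\geq c_1$, completeness supplies some $n$ with $u(n)=m-c_1$, whence $L_m = L_{m+d}$ with $d=c_2-c_1$, establishing ultimate periodicity of $\{L_n\}_{n\geq 0}$. The only delicate point I anticipate is the identification of the set of $c_{l,j}$ with $\mathbb{N}$ (which rests on the small computation forcing $u(0)=0$); once that is in place, the pigeonhole step together with completeness closes the argument immediately.
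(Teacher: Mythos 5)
Your proof is correct, but your ``only if'' direction takes a genuinely different route from the paper's. The paper's entire proof is a two-line reduction: it invokes Theorem \ref{thm4.2} together with two cited equivalences for finite-valued sequences (Everest et al.: linear recurrence iff ultimately periodic; Allouche--Shallit: $k$-regular iff $k$-automatic). Your ``if'' direction is essentially that same reduction (ultimately periodic $\Rightarrow$ linear recurrence $\Rightarrow$ regular by Theorem \ref{thm4.2} $\Rightarrow$ automatic by finite range). For ``only if,'' however, you bypass both the converse of Theorem \ref{thm4.2} and the Everest et al.\ citation: you observe that every $k$-kernel element of $\{L_{u(n)}\}$ has the form $\{L_{u(n)+c_{l,j}}\}$, that the computation $u(0)=0$ and $c_{l,j}=u(j)$ makes the set of attainable shifts contain all of $\mathbb{N}$, and then finiteness of the kernel plus pigeonhole plus completeness yields a literal period of $\{L_n\}$ directly. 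This is more self-contained and arguably cleaner than the paper's argument, since finiteness of the kernel (rather than finite generation of a module, as in the converse of Theorem \ref{thm4.2}) lets pigeonhole do all the work; the paper's version buys brevity at the cost of outsourcing the key equivalences to the literature. Two small points to tidy: to match the paper's definition of linear recurrence (valid for all $n\geq p$ where $p$ is the order), your ``if'' direction should pad the recurrence $L_n=L_{n-p}$ with zero coefficients so that its order exceeds the preperiod; and you should note explicitly that $j\mapsto c_{l,j}$ ranges over $\{u(j):j\ge 0\}$ because every $j\in\mathbb{N}$ occurs as a residue $0\le j\le k^l-1$ for some $l$ --- both are routine.
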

\begin{proof}
Note that if a sequence takes only finitely many values, then Everest et al. in \cite{GAI} told us that it satisfies a linear recurrence if and only if it is ultimately periodic, and Allouche and Shallit in \cite{Jp} showed that it is regular if and only if it is $k$-automatic. Then, by Theorem \ref{thm4.2}, we complete this proof.
\end{proof}
\begin{remark}
Proposition \ref{main11} shows that we can obtain an automatic sequence from a ultimately periodic sequence by a $``u(n)"$-index picking, where $\{u(n)\}_{n\geq0}$ is both $(0,0)$-order recursive and $(0,0)$-order complete.
\end{remark}
It is known that the Fibonacci sequence $\{f_n\}_{n\geq0}=\psi^{\infty}(0)$ is not ultimately periodic, where $\psi:0\rightarrow01,1\rightarrow0$. Hence, if $\sigma:~q_i\rightarrow q_{i}q_{i+t_1}\cdots q_{i+t_{k-1}}$ and $(t_i,t_j)=1$ for some $1\leq i,j\leq k-1$, then by Proposition \ref{main11}, $\tau_{f}(\sigma^{\infty}(q_0))$ is not $k$-automatic.

\section{Some generalizations}
Both the alphabet $\Sigma$ and the state set $Q$ in Section 3 and Section 4 are denoted by $\{q_n:n\in\mathbb{N}\}$ and the index sequences take values in $\mathbb{N}$.
A possible extension of the present approach is to focus on the infinite set $\{q_n:n\in\mathbb{Z}\}$ and the index sequences on $\mathbb{Z}$. On the infinite set $\{q_n:n\in\mathbb{Z}\}$, we can define morphisms and DCAs (DCAOs) similarly.
Moreover, we generalize $m$-periodic~$k$-uniform morphisms (or the $m$-periodic~$k$-DCAOs) by $T\in \mathbb{Z}^{m\times k}.$

Let $\sigma$ be a $m$-periodic $k$-uniform morphism and $\sigma^{\infty}(q_0)$ be a fixed point of $\sigma$. If $\tau(q_i)=i$, then the sequence $\{\mathbf{i}(n)\}_{n\geq0}:=\tau(\sigma^{\infty}(q_0))$ is called to be the \emph{index sequence} of $\sigma$. The following results are similar as Theorem \ref{index regular} and Proposition \ref{main2}.

\begin{thmbis}{index regular}
If $\sigma$ is a $m$-periodic $k$-uniform morphism, then its index sequence~$\{\mathbf{i}(n)\}_{n\geq0}$~is~$k$-regular.
\end{thmbis}

\begin{propbis}{main2}
Taking $\tau:q_i\rightarrow f(i)$~for every $i\geq0$, where
$f(i)$ is a polynomial with integer coefficients. If $\sigma$ is a $m$-periodic $k$-uniform morphism, then the sequence~$\tau(\sigma^{\infty}(q_0))$ is $k$-regular.
\end{propbis}

We give the following $2$-periodic $2$-uniform morphism on $\{q_n:n\in\mathbb{Z}\}$.
\begin{example}
Let $\sigma$ be a morphism defined by $q_{2i}\mapsto q_{2i}q_{2i+1}, q_{2i+1}\mapsto q_{2i-1}q_{2i+2} (i\in\mathbb{Z})$. Then, $\sigma^{\infty}(q_0)$ is the unique fixed point of $\sigma$. Let $\tau(q_i)=i$. Then, the index sequence $\{\mathbf{i}(n)\}_{n\geq0}=\tau(\sigma^{\infty}(q_0))=01(-1)2(-3)023(-5)(-2)012314\cdots$. It also can be generated by a DCAO in Figure \ref{integer}.
\begin{figure}[H]
\centering
\begin{tikzpicture}[scale=0.8, every node/.style={scale=0.8}, state/.style={scale=0.8, circle solidus,draw,
inner sep=1pt,minimum size=12mm},>=stealth,->,auto,black]
\node[state,initial,initial where=above] (a) {$q_0$\nodepart{lower} $0$};
\node[state] (b) [right=15mm of a] {$q_1$ \nodepart{lower} $1$};
\node[state] (c) [right=15mm of b] {$q_2$ \nodepart{lower} $2$};
\node(d) [right=15mm of c] {$\cdots$};
\node[state] (e) [left=15mm of a] {$q_{-1}$ \nodepart{lower} $-1$};
\node[state] (f) [left=15mm of e] {$q_{-2}$ \nodepart{lower} $-2$};
\node(g) [left=15mm of f] {$\cdots$};

\draw [->] (b.20) to [bend left]  node [above] {$1$} (c.160);
\draw [->] (c.20) to [bend left]  node [above] {$1$} (d.160);
\draw [->] (b.260) to [bend left=100] node [below] {$0$} (e.-90);
\draw [->] (d.280) to [bend left=100] node [below] {$0$} (b.-80);
\draw [->] (a.20) to [bend left]  node [above] {$1$} (b.160);
\draw [->] (a) to [in=-60, out=-120,loop,distance=10mm] node [below] {$0$} (a);
\draw [->] (c) to [in=-60, out=-120,loop,distance=10mm] node [below] {$0$} (c);
\draw [->] (f) to [in=-60, out=-120,loop,distance=10mm] node [below] {$0$} (f);
\draw [->] (e.20) to [bend left]  node [above] {$1$} (a.160);
\draw [->] (e.250) to [bend left=100] node [below] {$0$} (g.-90);
\draw [->] (f.20) to [bend left]  node [above] {$1$} (e.160);
\draw [->] (g.20) to [bend left]  node [above] {$1$} (f.160);
\end{tikzpicture}
\caption{A DCAO generating the sequence $\{b_n\}_{n\geq0}$.}
\label{integer}
\end{figure}
\end{example}
We end this section by two regular sequences which can be generated by countable state automata and also morphisms on a countable alphabet.
\begin{example}
Let $a(n)=\lfloor\log_b(\alpha n+\beta)\rfloor$. Then, the integer sequence $\{a(n)\}_{n\geq0}$ has been studied by Allouche, Shallit, Bell, Moshe, Rowland and Zhang et al. in \cite{jp,YM,Row,Zhang} respectively. Especially, if $b=2,\alpha=1,\beta=0$, we obtain a $2$-regular sequence $\{\lfloor\log_2 n\rfloor\}_{n\geq1}$. Let $b_0=1,b_n=\lfloor\log_2 n\rfloor~(n\geq1)$. Then, it can be generated by formulas: $b_0=1,b_1=0,b_{2n}=b_{2n+1}=b_n+1~(n\geq1)$. It also can be generated by a DCAO in Figure \ref{figure1111}.
\end{example}
\begin{figure}[H]
\centering
\begin{tikzpicture}[scale=0.8, every node/.style={scale=0.8}, state/.style={scale=0.8, circle solidus,draw,
inner sep=1pt,minimum size=12mm},>=stealth,node distance=2.5cm,->,auto,black]
    \node[state,initial]  (1)                      {$q_0$\nodepart{lower}$0$};
    \node[state] (2) [right of=1] {$q_1$\nodepart{lower}$0$};
    \node[state](3) [right of=2] {$q_2$\nodepart{lower}$1$};
    \node(4) [right of=3] {$\cdots$};
    \node[state] (n) [right of=4] {$q_{n+1}$\nodepart{lower}$n$};
    \node(n+1) [right of=n] {$\cdots$};

    \path[->]
    		(1) edge [bend left]  node {$1$} (2)
                      edge [loop above] node {$0$} ()
    		(2) edge [bend left]   node  {$0,1$} (3)	
            (3) edge [bend left]   node  {$0,1$} (4)
            (4) edge [bend left]   node  {$0,1$} (n)
            (n) edge [bend left]   node  {$0,1$} (n+1);
    \end{tikzpicture}
\caption{A DCAO generating the sequence $\{b_n\}_{n\geq0}$.}
\label{figure1111}
\end{figure}

\begin{example}
Let $f(n)$ be a polynomial with integer coefficients. Then, $\{f(n)\}_{n\geq0}$ is a $k$-regular sequence for every $k
\geq2$ in \cite{Jp}. Especially, the sequence~$\{b_n\}_{n\geq0}=\{n\}_{n\geq0}$ is $2$-regular, since it can be generated by formulas: $b_{2n}=2b_n$ and $b_{2n+1}=2b_n+1$. It also can be generated by $2$-uniform morphism~$\sigma:i\rightarrow(2i)~(2i+1)~(i\geq0)$, i.e., $\{b_n\}_{n\geq0}=\sigma^{\infty}(0)$.
\end{example}

\medskip

\noindent\textbf{Acknowledgements. }

The authors wish to thank  Professor Zhi-Ying Wen for inviting them to visit
the Morningside Center of Mathematics, Chinese Academy of Sciences.
They also wish to thank Professor Li-Feng Xi for his helpful suggestions.

\medskip

\noindent\textbf{References. }

\end{document}